\newtheorem{thm}{Theorem}
\newtheorem{lem}{Lemma}
\newtheorem{cor}{Corollary}
\newtheorem{assumptions}{Assumption}
\theoremstyle{definition}
\newtheorem{defn}{Definition}
\newtheorem{remark}{Remark}
\newcommand{\etal}{\textit{et al.}~}
\newcommand{\PR}[1]{\mbox{Pr}\left\{ #1 \right\}}
\newcommand{\calX}{\mathcal{X}}
\newcommand{\calY}{\mathcal{Y}}
\newcommand{\calT}{\mathcal{T}}
\newcommand{\calJ}{\mathcal{J}}
\newcommand{\calS}{\mathcal{S}}
\newcommand{\calU}{\mathcal{U}}
\newcommand{\calI}{\mathcal{I}}
\newcommand{\calF}{\mathcal{F}}
\newcommand{\bz}{\mathbf{z}}
\newcommand{\bbz}{\mathbf{z}}
\renewcommand{\tilde}{\widetilde}
\newcommand{\Adv}{\mathsf{Adv}}
\newcommand{\Xh}{\hat{X}}
\DeclareMathOperator*{\argmax}{\arg\!\max}
\newcommand{\brk}[1]{\langle #1 \rangle}
\newcommand{\Reals}{\mathbb{R}}
\newcommand{\normQ}[1]{\| #1 \|_Q}
\newcommand{\normEuc}[1]{\| #1 \|_2}
\newcommand{\ox}{\bar{x}}
\newcommand{\ones}{\mathbf{1}}
\newcommand{\inertia}{\mathcal{I}}
\newcommand{\defined}{\triangleq}
\newcommand{\Tr}[1]{\mathrm{ tr}\left(#1 \right)}
\newcommand{\diag}[1]{\mathrm{diag}\left( #1 \right)}
\newcommand{\pxy}{p_{X,Y}}
\newcommand{\px}{p_X}
\newcommand{\py}{p_Y}
\newcommand{\pxp}{p_{X'}}
\newcommand{\pygx}{p_{Y|X}}
\newcommand{\Ppygx}[1]{\mathbf{p}_{Y|X={#1}}}
\newcommand{\pxhgx}{p_{\Xh|X}}
\newcommand{\qx}{q_X}
\newcommand{\ExpVal}[2]{\mathbb{E}_{#1}\left[ #2 \right]}
\newcommand{\Pxy}{P_{X,Y}}
\newcommand{\Pygx}{P_{Y|X}}
\newcommand{\Pxxp}{P_{X,X'}}
\newcommand{\Pxhgx}{P_{\hat{X}|X}}
\newcommand{\Px}{\mathbf{p}_X}
\newcommand{\Qx}{\mathbf{q}_X}
\newcommand{\Pxp}{\mathbf{p}_{X'}}
\newcommand{\Py}{\mathbf{p}_Y}
\newcommand{\At}{\tilde{A}}
\newcommand{\Bt}{\tilde{B}}
\newcommand{\Ut}{\tilde{U}}
\newcommand{\Vt}{\tilde{V}}
\newcommand{\lambdat}{\tilde{\lambda}}
\newcommand{\Sigmat}{\tilde{\Sigma}}
\newcommand{\by}{\mathbf{y}}
\newcommand{\blambda}{\pmb{\lambda}}
\newcommand{\bLambda}{\pmb{\Lambda}}
\newcommand{\Emat}{F}
\newcommand{\bu}{\mathbf{u}}
\newcommand{\bv}{\mathbf{v}}
\newcommand{\ba}{\mathbf{a}}
\newcommand{\tu}{\tilde{u}}
\newcommand{\tv}{\tilde{v}}
\newcommand{\olU}{\overline{U}}
\newcommand{\sto}{\mbox{\normalfont s.t.}}
\newcommand{\KFnorm}[2]{\| #1 \|_{#2}}
\newcommand{\bigO}{O}
\newcommand{\ttheta}{\tilde{\theta}}
\definecolor{light-gray}{gray}{.9}
\newmdtheoremenv[%
  backgroundcolor=light-gray, %
  linecolor=black,
  linewidth =1pt,%
  skipabove = 10pt,%
  skipbelow = 10pt
]{work}{Work in progress}
\newmdtheoremenv[%
  fontcolor=BrickRed,
  linecolor=black,
  linewidth =1pt,%
  skipabove = 10pt,%
  skipbelow = 10pt
]{todo}{Place holder}
\newmdtheoremenv[%
  fontcolor=BrickRed,
  linecolor=black,
  linewidth =1pt,%
  skipabove = 10pt,%
  skipbelow = 10pt
]{disc}{Disclaimer}
\title{Bounds on inference}
\author{Fl\'avio P. Calmon, Mayank Varia, Muriel M\'edard, \\ Mark M. Christiansen,
  Ken R. Duffy, Stefano Tessaro
\thanks{This work is sponsored by the Intelligence Advanced Research Projects
Activity under Air Force Contract FA8721-05-C-0002.  Opinions, interpretations,
conclusions and recommendations are those of the authors and are not necessarily
endorsed by the United States Government. M.C.
and K.D. are supported by Science Foundation Ireland Grant No. 
11/PI/1177.

\indent F.~P.~Calmon and M.~M\'edard are with the Research Laboratory
of Electronics at the Massachusetts Institute of Technology, Cambridge, MA (e-mail:
flavio@mit.edu; medard@mit.edu).
\newline 
\indent M.~Varia is with  MIT Lincoln Laboratory, Lexington, MA (e-mail:
mayank.varia@ll.mit.edu).
\newline 
\indent M.~M.~Christiansen and K.~R.~Duffy are with the Hamilton Institute
at the National University of Ireland, Maynooth, Ireland (e-mail:
mark.christiansen@nuim.ie; ken.duffy@nuim.ie).
\newline 
\indent S.~Tessaro is with the Computer Science and Artificial
Intelligence Laboratory at the Massachusetts Institute of Technology, Cambridge,
MA (e-mail: tessaro@csail.mit.edu).
}
}
\date{}                                           
\begin{document}
\maketitle

\begin{abstract}
Lower bounds for the average probability of error of estimating a hidden
variable $X$ given an observation of  a correlated random variable $Y$, and
Fano's inequality in particular, play a central role in information theory. In
this paper, we present a lower bound for the average estimation error based on
the marginal distribution of $X$ and the principal inertias of the joint
distribution matrix of $X$ and $Y$. Furthermore, we discuss an  information
measure based on the sum of the largest principal inertias,   called
$k$-correlation, which generalizes maximal correlation. We show that
$k$-correlation satisfies the Data Processing Inequality and is convex in the
conditional distribution of $Y$ given $X$. Finally, we investigate how to
answer a fundamental question in inference and privacy: given an observation
$Y$, can we estimate a function $f(X)$ of the hidden random variable $X$ with an
average error below a certain threshold? We provide a general method for
answering this question using an approach based on rate-distortion theory.
\end{abstract}

%
%
%

\section{Introduction}

Consider the standard problem in estimation theory: Given an observation of a
random variable $Y$, what can we learn about a correlated, hidden variable $X$?
For example, in security systems, $X$ can be the plaintext message, and  $Y$
 the ciphertext or any additional side information available to an
adversary.  Throughout the paper, we assume that $X$ and $Y$ are discrete random
variables with finite support.

If the joint distribution between $X$ and $Y$ is known, the  probability of
error of estimating $X$ given an observation of $Y$ can be calculated exactly.
However, in most practical settings, this joint distribution is unknown.
Nevertheless, it might be possible to estimate certain correlation measures of
$X$ and $Y$  reliably, such as  maximal correlation, $\chi^2$-statistic or
mutual information.

Given  estimates of such correlation measures, is it possible to determine a
lower bound for the average error probability of estimating $X$ from $Y$ over
all possible estimators? We answer this question in the affirmative. In the
context of security, this bound might characterize the best  estimation of the
plaintext that a (computationally unbounded) adversary can make given an
observation of the output of the system.

Furthermore, owing to the nature of the joint distribution, it may be infeasible
to estimate $X$ from $Y$ with small error probability. However, it is possible
that a non-trivial function   $f(X)$ exists that is of interest to a learner and
can be  estimated reliably.   If $f$ is the identity function, this  reduces to
the standard problem of estimating $X$ from $Y$.  Determining if such a function
exists is relevant to several applications in inference, privacy and security
\cite{goldwasser_probabilistic_1984}.


In this paper, we establish lower bounds for the  average estimation error of
$X$ and $f(X)$ given an observation of $Y$. These bounds depend only on certain
measures of information between $X$ and $Y$ and the marginal distribution of
$X$. The results hold for any estimator, and they shed light on the fundamental
limits of what can be inferred about a hidden variable from a noisy measurement.
The bounds derived here are similar in nature to Fano's inequality  
\cite{cover_elements_2006},    
 and can be characterized as the solution of a convex program which, in turn, is
closely related to the rate-distortion optimization problem. 

Our work has two main contributions. First, we analyze properties of a measure
of information (correlation) between $X$ and $Y$ based on the principal
inertias of the joint distribution of $X$ and $Y$. The estimation of principal
inertias is widely studied in the field of correspondence analysis, and
is used in practice to analyze categorical data.  The metric we propose, called
\textit{$k$-correlation}, is defined as the sum of the $k$ largest principal inertias, which, in
turn, are the singular values of a particular decomposition of the joint
distribution matrix of $X$ and $Y$. We show that $k$-correlation generalizes
both the maximal correlation and the $\chi^2$ measures of correlation.
We also prove that $k$-correlation satisfies two key properties for information
measures: (i) the Data Processing Inequality and (ii) convexity in the
conditional probabilities $\pygx$. Furthermore, we  derive a family of lower
bounds for the average  error probability of estimating $X$ given $Y$ based on
the principal  inertias between $X$ and $Y$ and the marginal distribution of
$X$.

The second  contribution is a general procedure for bounding the average
estimation error of  a deterministic function of $X$ given an observation of
$Y$. These bounds are non-trivial and help characterize the
fundamental limits of what can be learned about $X$ given an observation of $Y$.
For example, given  $I(X;Y)\leq \theta$, a positive integer $M$ and the marginal
distribution of $X$, this procedure allows us to compute a lower bound for the
average estimation error of  any surjective function  that maps the
support of $X$ onto $\{1,\dots,M\}$.

The rest of the paper is organized as follows. Section \ref{sec:backandnot}
presents an overview of the main results and discusses related work. Section
\ref{sec:measureMoments}  introduces the $k$-correlation metric of information,
and proves that it is convex in the transition probability $\pygx$ and satisfies
the Data Processing Inequality. Section \ref{sec:fanoinertia} presents a
Fano-like inequality based on the  principal inertias and the marginal
distribution $\px$. Section \ref{sec:functions} presents a general method for
deriving bounds for the average estimation error of deterministic surjective
functions of $X$ from an observation of $Y$.  Finally, concluding remarks are
presented in section \ref{sec:conclusion}.  


\section{Overview of main results and related work}
\label{sec:backandnot}

\subsection{Notation}
We assume throughout this paper that, for a given sample space $\Omega$,
$X:\Omega\rightarrow \calX$ is the hidden random variable and
$Y:\Omega\rightarrow \calY$ is the observed random variable, where   $
\calX=\{1,\dots,m\}$ and $ \calY=\{1,\dots,n\}$ are the  respective support sets.
We denote by $\Pxy$ and $\Pygx$ the $m\times n$ matrices with entries $
\left[\Pxy\right]_{i,j}\defined \pxy(i,j)$ and $\left[\Pygx\right]_{i,j}\defined
\pygx(j|i)$, respectively.  Furthermore, we denote by $\Px\in \Reals^m$,
 $\Py\in \Reals^n$ and $\Ppygx{j} \in \Reals^n$  the column vectors with entries
\begin{equation*}
\left[\Px\right]_{i}\defined p_{X}(i),~  \left[\Py\right]_{i}\defined p_{Y}(i)\mbox{~and~}
\left[\Ppygx{j}\right]_{i}\defined \pygx(i|j),
\end{equation*}
respectively. The diagonal matrices with entries $\px$ and $\py$ are represented
as $D_X = \diag{\Px}$ and $D_Y=\diag{\Py}$. For a discrete random variable $Z$,
we denote by $X\rightarrow Y \rightarrow Z$  the fact that $p_{X,Y,Z}(x,y,z)
=\px(x)\pygx(y|x)p_{Z|Y}(z|y)$ (i.e. $X,Y,Z$ form a Markov chain). 

Given an observation of $Y$, the  estimation problem considered here is to find
a function $h(Y)=\hat{X}$ that minimizes the average error probability $P_e$,
defined as 
\begin{equation}
    P_e \defined \Pr\left\{ \hat{X} \neq X \right\}.
    \label{eq:probError}
\end{equation}
Note that $X\rightarrow Y \rightarrow \hat{X}$.  $P_e$ is minimized when
$\hat{X}$ is the maximum-likelihood estimate of $X$.

The column vector with all entries  equal to 1 is represented by $\ones$.
The length of the vector will be clear from the context. 
For any given matrix
$A$, we denote by $\sigma_k(A)$ the $k$-th largest singular value of $A$. If $A$
is hermitian, we denote the $k$-th largest eigenvalue of $A$ by $\Lambda_k(A)$.
We denote by $\calS^m_{++}$  the set of positive definite matrices in
$\Reals^{m\times m}$. Furthermore,
\begin{equation}
  \calT_{m,n} \defined \left\{A\in \Reals^{m\times n}: \mbox{$A$ is row-stochastic, }
  [A]_{i,j}\geq 0  \right\}.
\end{equation}

For a given measure of information (correlation) $\calI(X;Y)$ between $X$ and
$Y$ (such as mutual information or maximal correlation), we denote
$\calI(X;Y)=\calI(\px,\Pygx)$  when we wish to highlight $\calI(X;Y)$ as a
function $\px$ and the transition matrix $\Pygx$.


\subsection{Overview of main results}

Assume that the joint distribution  $\pxy$ is unknown, but that the marginal
distribution $\px$ is given.   Furthermore, assume that  a certain measure of
information (correlation) $\calI(X;Y)$ between $X$ and $Y$ is bounded above by
$\theta$, i.e. $\calI(X;Y)\leq \theta$.  In practice, the value of $\theta$
and $\px$ could be determined, for example, from multiple i.i.d. samples  drawn
according to $\pxy$. The number of samples available might be insufficient to
characterize $\pxy$, but enough to estimate
$\theta$ and $\px$ reliably. Under these assumptions, what can be said about the
smallest $P_e$ possible? Our goal in this paper is to derive lower bounds of the
form $P_e\geq L_\calI(\px,\theta)$, creating a limit on how well $X$ can be
inferred from $Y$.

The characterization of $L_\calI(\px,\theta)$ for different measures of
information $\calI$ is particularly relevant for applications in privacy and
security, where $X$ is a variable that should remain hidden (e.g.
plaintext). A lower bound for $P_e$ can then be viewed as a security metric:
regardless of an adversary's computational resources, he will not be able to
guess $X$ with an average estimation   error smaller than $L_\calI(\px,\theta)$
given an observation of $Y$.
Therefore, by simply estimating $\theta$ and calculating $L_\calI(\px,\theta)$
we are able to evaluate the privacy threat incurred by an adversary
that has access to $Y$.  


If $\calI(X;Y)=I(X;Y)$, where $I(X;Y)$ is the mutual information between $X$ and
$Y$, then Fano's inequality \cite{cover_elements_2006} provides a lower bound
for $P_e$. However, in practice, several other statistics are  used 
in addition to mutual information in order to capture the information (correlation)
between $X$ and $Y$. In this work, we focus on one particular metric, namely the \textit{principal
inertia components} of $\pxy$, denoted by the
vector 
  $(\lambda_1,\dots,\lambda_d),$ 
where $d=\min\{m-1,n-1 \}$, and $\lambda_1\geq
\lambda_2\geq\dots\geq\lambda_d$. The exact definition of the principal inertias
is presented in Section \ref{sec:measureMoments}.

\subsubsection{Bounds based on principal inertia components}

 The principal inertias generalize other measures that are used in information
 theory. In particular, $\lambda_1 = \rho_m^2(X;Y)$, where $\rho_m(X;Y)$ is the
 \textit{maximal correlation} between $X$ and $Y$. Given
  \begin{align*}
  \calS \defined \left\{ (f(X),g(Y)):\right.&\ExpVal{}{f(X)}=\ExpVal{}{g(Y)}=0,\\
  &\left.\ExpVal{}{f^2(X)}=\ExpVal{}{g^2(Y)}=1 \right\},
\end{align*}
the maximal correlation $\rho_m(X;Y)$ is defined as \cite{renyi_measures_1959}
\begin{align*}
  \rho_m(X;Y) &= \max_{(f(X),g(Y))\in\calS}\ExpVal{}{f(X)g(Y)}.
\end{align*}
 In section \ref{sec:measureMoments} and
 appendix \ref{apx:interp}, we discuss how to compute the principal inertias
 and provide two alternative characterizations.  Compared to mutual information,
 the principal inertias  provide a finer-grained decomposition of the
 correlation between $X$ and $Y$. 

We propose a  metric of information called $k$-correlation, defined as
 $ \calJ_k(X;Y) \defined \sum_{i=1}^k \lambda_i$.
This metric satisfies two key
properties:

\begin{itemize}
  \item Convexity in $\pygx$ (Theorem \ref{thm:convex});
  \item Data Processing Inequality (Theorem \ref{lem:dataProc}). This is also satisfied by
      $\lambda_1,\dots,\lambda_d$ individually.
    \end{itemize}

By making use of the fact that the principal inertia components satisfy the Data
Processing Inequality, we are able to derive a family of bounds for $P_e$ in
terms of $\px$ and $\lambda_1,\dots,\lambda_d$, described in Theorem
\ref{thm:Bound}. This result sheds light on the relationship of $P_e$ with   the
principal inertia components. 

One immediate consequence of Theorem \ref{thm:Bound} is a useful scaling law for
$P_e$ in terms of the largest principal inertia (i.e. maximal
correlation). Let $X=1$ be the most likely outcome for $X$. Corollary
\ref{cor:coolBounds} proves that the advantage an adversary has of guessing
$X$, over the trivial solution of simply guessing the most
likely outcome of $X$ (i.e. $X=1$), satisfies
\begin{equation} 
  \Adv(X;Y) \defined \left|1-\px(1)-P_e\right|\leq O\left(\sqrt{\lambda_1}\right).
\label{eq:advGuess}
\end{equation}

\subsubsection{Bounding the estimation error of functions}

For most security applications, minimizing the probability that an adversary
guesses the hidden variable $X$ from an observation of $Y$ is insufficient.
Cryptographic definitions of security, and in particular semantic security
\cite{goldwasser_probabilistic_1984}, require that an adversary has negligible
advantage in guessing any function of the input given an observation of the
output. In light of this, we present bounds for the best
possible average error achievable for estimating 
functions of  $X$ given an observation of $Y$.

Still assuming that $\pxy$ is unknown, $\px$ is given and $\calI(X;Y)\leq
\theta$, we present in Theorem \ref{thm:boundPeM} a method for adapting bounds
of the form $P_e\geq L_\calI(\px,\theta)$ into bounds for the average estimation error of functions of $X$
given $Y$. This method depends on $\calI$ satisfying a few technical assumptions
(stated in section \ref{sec:functions}), foremost of which is the
existence of a lower bound $L_\calI(\px,\theta)$ that is
Schur-concave\footnote{A function $f:\Reals^n\rightarrow \Reals$ is said to be
  \textit{Schur-concave} if for all $x,y\in \Reals^n$ where $x$ is majorized by
$y$, then $f(x)\geq f(y)$.} in $\px$ for a fixed $\theta$. Theorem
\ref{thm:boundPeM} then states that, under these assumptions, for any deterministic,
surjective function $f:\calX\rightarrow \{1,\dots,M\}$, we can obtain a lower
bound for the average estimation error of $f$ by  computing
$L_\calI(p_U,\theta)$, where $U$ is a random variable that is a function $X$.

Note that Schur-concavity of $L_\calI(p_X,\theta)$ is crucial for this result.
In Theorem \ref{thm:schur}, we show that this condition is always satisfied when
$\calI(X;Y)$ is concave in $\px$ for a fixed $\pygx$, convex in $\pygx$ for a
fixed $\px$, and satisfies the Data Processing Inequality. This generalizes a
result by Ahlswede  \cite{ahlswede_extremal_1990} on the extremal properties of
rate-distortion functions.  Consequently, Fano's inequality can be  adapted in
order to bound the average estimation error of functions, as shown in
Corollary \ref{cor:PeMboundI}. By observing that a particular form of the bound
stated in Theorem \ref{thm:Bound} is Schur-concave, we also present a bound for
the error probability of estimating functions in terms of  the maximal
correlation, as shown in Corollary \ref{cor:PeMboundrho}.

\subsection{Background}

The joint distribution matrix $\Pxy$ can be viewed as a contingency table and
decomposed  using standard techniques from correspondence analysis
\cite{greenacre_correspondence_2007,greenacre_theory_1984}.  We note that this
decomposition was originally investigated by Hirschfield
\cite{hirschfeld_connection_1935}, Gebelein \cite{gebelein_statistische_1941} and
later by R\'enyi \cite{renyi_measures_1959}. For a quick overview of
correspondence analysis, we refer the reader to
\cite{greenacre_geometric_1987}.

The largest principal inertia of $\Pxy$ is equal to $\rho_m^2(X;Y)$, where
$\rho_m(X;Y)$ is the \textit{maximal correlation} between $X$ and $Y$. Maximal
correlation has been widely studied in the information theory and statistics
literature (e.g \cite{renyi_measures_1959}). Anantharam \etal present in
\cite{anantharam_maximal_2013} an overview of different characterizations of
maximal correlation, as well as its application in information theory. 

The Data Processing Inequality for the principal inertias was shown by Kang and
Ulukus in \cite[Theorem 2]{kang_new_2011} in a different setting than the one
considered here. Kang and Ulukus make use of the
decomposition of the joint distribution matrix  to derive outer bounds
for the rate-distortion region achievable in certain distributed source and channel
coding problems. 

Lower bounds on the average estimation error can be found using Fano-type
inequalities. Recently, Guntuboyina \etal
(\cite{guntuboyina_lower_2011,guntuboyina_sharp_2013}) presented a family of
sharp bounds for the minmax risk in estimation problems involving general
$f$-divergences. These bounds generalize Fano's inequality and, under certain
assumptions,  can be extended in order to lower bound $P_e$.

Most information-theoretic approaches for estimating or communicating functions
of a random variable are concerned with properties of specific functions  given
i.i.d. samples of the hidden variable $X$, such as in the functional compression
literature \cite{doshi_functional_2010,orlitsky_coding_2001}. These results
are rate-based and asymptotic, and do not immediately extend to the case where
the function $f(X)$ can be an arbitrary member of a class of 
functions, and only a single observation is available.

More recently,  Kumar and Courtade
\cite{kumar_which_2013} investigated boolean functions in an
information-theoretic context. In particular, they analyzed which is the most
informative (in terms of mutual information) 1-bit function (i.e. $M=2$) for the
case where $X$ is composed by $n$ i.i.d. Bernoulli(1/2) random variables, and
$Y$ is the result of passing $X$ through a discrete memoryless binary symmetric
channel. Even in this simple case, determining the most informative function is
non-trivial.

Bellare \etal \cite{bellare_semantic_2012}  considered  the standard wiretap
setting \cite{liang_information_2009}, and proved the equivalence between
semantic security and minimizing the maximum mutual information  over all
possible input message distributions. Since semantic security
\cite{goldwasser_probabilistic_1984} is achieved only when an adversary's
advantage of correctly computing a function of the hidden
variable given an observation of the output is negligibly small, the results in
\cite{bellare_semantic_2012}  are closely related to the ones presented here.


\section{A measure of information based on principal inertias}
\label{sec:measureMoments}

In this section we discuss how the joint probability matrix $\Pxy$ can be
decomposed into principal inertia components\footnote{The term \textit{principal
inertias} is borrowed from the correspondence analysis literature
\cite{greenacre_theory_1984}.}, and introduce the $k$-correlation measure. We
also prove that the $k$-correlation measure is convex in $\pygx$ and satisfies
the Data Processing Inequality. Several equivalent characterizations of the
principal inertias have appeared in the literature (e.g.
\cite{gebelein_statistische_1941}  and \cite{greenacre_geometric_1987}). We
discuss two of these characterizations  in appendix \ref{apx:interp}.


 Consider the singular value
 decomposition \cite{horn_matrix_2012} of the matrix $D_{X}^{-1/2} \Pxy
 D_Y^{-1/2}$, given by
\begin{equation}
    D_{X}^{-1/2} \Pxy D_Y^{-1/2}= U\Sigma V^T, 
    \label{eq:fullPxyDecomp}
\end{equation}
and define  $\At \defined D_X^{1/2}U$ and $\Bt \defined D_Y^{1/2}V$. Then 
\begin{equation}
 \Pxy = \At \Sigma \Bt^T,
 \label{eq:comactDecompPxy}
\end{equation} 
where $\At^T D_X^{-1} \At=\Bt^T D_Y^{-1} \Bt = I$. 

\begin{defn} 
  The square of the diagonal entries of $\Sigmat$ are called the \textit{principal
  inertias}, and are denoted by $\lambda_1,\dots,\lambda_d$, where
  $d=\min(m-1,n-1)$. Throughout this paper, we assume that principal inertias
  are ordered as $\lambda_1\geq \lambda_2\geq\dots\geq\lambda_d$.
  \label{defn:inertias}
\end{defn}

It can be shown that $\At$, $\Bt$ and $\Sigma$ have the form
\begin{align}
  \At = \left[ \Px~~A \right],~ \Bt = \left[ \Py~~B \right], \label{eq:defAB}\\
  \Sigma  =
  \diag{1,\sqrt{\lambda_1},\dots,\sqrt{\lambda_d}}, \nonumber
\end{align}
and, consequently, the joint distribution can be written as 
\begin{equation} 
  \pxy(x,y) = \px(x)\py(y)+\sum_{k=1}^d
  \sqrt{\lambda_k}b_{y,k}a_{x,k},
\end{equation}
where $a_{x,k}$ and $b_{y,k}$ are the entries of $A$ and $B$  in
\eqref{eq:defAB}, respectively,


Based on the decomposition of the joint distribution matrix, we  define below a
measure of information between $X$ and $Y$ based on the principal inertias.

\begin{defn}
  Let $\KFnorm{A}{k}$ denote the $k$-th Ky Fan norm\footnote{For $A\in
\Reals^{m\times n}$,  $\KFnorm{A}{k} =
\sum_{i=1}^k \sigma_i$, where $\sigma_1,\dots,\sigma_{\min\{m,n\}}$ are the
singular values of $A$.} \cite[Example 7.4.8]{horn_matrix_2012} of a matrix $A$. For $1\leq
  k \leq d$, we define the \textit{$k$-correlation} between $X$ and $Y$ as
  \begin{align}
    \calJ_k(X;Y)& \defined \KFnorm{D_X^{-1/2}\Pxy D_Y^{-1}\Pxy^T D_X^{-1/2} }{k}-1\\
               &= \sum_{i=1}^k \lambda_i.
  \end{align}
\end{defn}
Note that 
\begin{equation*}
    \calJ_1(X;Y)=\rho_m^2(X;Y),
\end{equation*}
where $\rho_m(X;Y)$ is the \textit{maximal correlation} of $(X,Y)$
\cite{anantharam_maximal_2013}, and
\begin{align*}
  \calJ_d(X;Y) = \ExpVal{X,Y}{\frac{\pxy(X,Y)}{\px(X)\py(Y)}}-1=\chi^2.
\end{align*}

We now show that $k$-correlation and, consequently, maximal correlation, is
convex in  $\pygx$ for a fixed $\px$ and satisfies the Data Processing
Inequality.

\subsubsection{Convexity in $\pygx$}
We use the next lemma to prove convexity of $\calJ_k(X;Y)$ in the transition
probability $\pxy$. 

\begin{lem} 
  For   $W\in \calS^m_{++}$ and $1\leq k \leq m$,
  the function $h_k: \Reals^{m\times n}\times \calS^n_{++}\to \Reals$ defined
  as
    \begin{equation}
      \label{eq:defhk}
      h_k(C,W)\defined \KFnorm{ CW^{-1}C^T }{k}
    \end{equation}
is convex.
\label{lem:convex}
\end{lem}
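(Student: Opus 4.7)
The plan is to express $h_k$ as a supremum of jointly convex functions by combining two standard ingredients: the Ky Fan variational characterization of the sum of the top $k$ eigenvalues of a symmetric matrix, and the joint convexity of the matrix fractional function $(L,W)\mapsto \Tr{L^T W^{-1} L}$ on $\Reals^{n\times k}\times \calS^n_{++}$.

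First I would observe that because $W\in\calS^n_{++}$, the matrix $M(C,W) \defined CW^{-1}C^T$ is positive semidefinite for every $C\in\Reals^{m\times n}$, so its singular values coincide with its eigenvalues. Hence
\begin{equation*}
  h_k(C,W) = \sum_{i=1}^k \Lambda_i\bigl(CW^{-1}C^T\bigr).
\end{equation*}
Next I would invoke Ky Fan's maximum principle, which states that for every symmetric $M\in\Reals^{m\times m}$,
\begin{equation*}
  \sum_{i=1}^k \Lambda_i(M) \;=\; \max_{U\in\calU_k}\Tr{U^T M U},\qquad \calU_k\defined \{U\in\Reals^{m\times k}:U^TU=I_k\}.
\end{equation*}
Applying this to $M=CW^{-1}C^T$ gives
\begin{equation*}
  h_k(C,W) \;=\; \max_{U\in\calU_k}\Tr{U^T C W^{-1} C^T U} \;=\; \max_{U\in\calU_k} g_U(C,W),
\end{equation*}
where $g_U(C,W)\defined \Tr{(C^T U)^T W^{-1}(C^T U)}$.

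For each fixed $U\in\calU_k$, the map $C\mapsto L(C)\defined C^T U$ is linear in $C$, and it remains to show that the matrix fractional function $\phi(L,W)\defined \Tr{L^T W^{-1} L}$ is jointly convex on $\Reals^{n\times k}\times \calS^n_{++}$. This is a well-known fact; the cleanest route is the epigraph characterization via Schur complements,
\begin{equation*}
  \bigl\{(L,W,t):\Tr{L^T W^{-1} L}\leq t,\ W\succ 0\bigr\} = \left\{(L,W,t):\exists\, Z\succeq 0,\ \begin{pmatrix} W & L\\ L^T & Z \end{pmatrix}\succeq 0,\ \Tr{Z}\leq t\right\},
\end{equation*}
which exhibits the epigraph as the projection of a spectrahedron and hence as a convex set. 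Therefore $g_U = \phi\circ (L(\cdot),\mathrm{id})$ is jointly convex in $(C,W)$ as the composition of an affine map with a jointly convex function.

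Finally, since the pointwise supremum of any family of jointly convex functions is jointly convex, $h_k = \sup_{U\in\calU_k} g_U$ is jointly convex on $\Reals^{m\times n}\times \calS^n_{++}$, completing the proof. The main subtlety I anticipate is rigorously establishing joint convexity of $\phi$; beyond the Schur-complement argument, an alternative is to verify it directly via the inequality $(L_1+L_2)^T(W_1+W_2)^{-1}(L_1+L_2) \preceq L_1^T W_1^{-1} L_1 + L_2^T W_2^{-1} L_2$ (the matrix harmonic–arithmetic mean inequality), and then take the trace.
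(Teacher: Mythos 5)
Your proof is correct and follows essentially the same route as the paper: both invoke the Ky Fan variational characterization $\KFnorm{Q}{k}=\max_{U^TU=I_k}\Tr{U^TQU}$, establish joint convexity of the resulting inner function in $(C,W)$, and conclude via the pointwise-supremum rule. The only cosmetic difference is that you treat $\Tr{(C^TU)^TW^{-1}(C^TU)}$ as a single matrix fractional function (justified by Schur complements), whereas the paper decomposes the trace column-by-column into a sum of vector fractional functions $\ba^TW^{-1}\ba$ and cites convexity of each term from \cite[Example 3.4]{boyd_convex_2004}.
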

\begin{proof}
  Let $Q \defined  CW^{-1}C^T $.  Since $Q$ is positive semidefinite,
  $\KFnorm{Q}{k}$ is the sum of the $k$ largest eigenvalues of $Q$, and can be
  written as  \cite{fan_theorem_1949,overton_sum_1992}:
  \begin{equation}
    h_k(C,W) = \KFnorm{Q}{k} = \max_{Z^T Z= I_k} \Tr{Z^T Q Z}.
  \end{equation}
  Let $Z$ be fixed and $Z^T Z = I_k$, and denote the $i$-th column of $Z$ by
  $\bbz_i$. Note that $g(\ba,W)\defined \ba^T W^{-1} \ba$ is convex
  \cite[Example 3.4]{boyd_convex_2004} and, consequently, $g(C^T
   \bbz_i,W )$ is also convex in $C$ and $W$. Since the sum of convex
  functions is itself convex, then $\Tr{Z^TQ Z} = \sum_{i=1}^m g(C^T
   \bbz_i,W )  $ is also convex in $X$ and $Y$. The result follows by noting
  that the pointwise supremum over an infinite set of convex functions is also a
  convex function  \cite[Sec. 3.2.3]{boyd_convex_2004}.
\end{proof}

\begin{thm}
  \label{thm:convex}
    For a fixed $\px$, $\calJ_k(X;Y)$ is convex in $\pygx$.
\end{thm}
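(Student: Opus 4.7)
The plan is to recognize $\calJ_k(X;Y)$, up to an additive constant, as the composition of the convex function $h_k$ from Lemma \ref{lem:convex} with affine maps of $\pygx$, and then invoke the standard fact that convexity is preserved under precomposition with affine maps.

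First I would unpack the matrices in the definition of $\calJ_k$ in terms of $\Pygx$. For fixed $\px$ we have $\Pxy = D_X \Pygx$, so
\begin{equation*}
C(\Pygx) \defined D_X^{-1/2}\Pxy = D_X^{1/2}\Pygx
\end{equation*}
is affine (in fact linear) in $\Pygx$. Similarly, $\Py = \Pxy^T \ones = \Pygx^T \Px$, so
\begin{equation*}
W(\Pygx) \defined D_Y = \diag{\Pygx^T \Px}
\end{equation*}
is also affine in $\Pygx$. With these identifications,
\begin{equation*}
\calJ_k(X;Y) = h_k\bigl(C(\Pygx),\,W(\Pygx)\bigr) - 1,
\end{equation*}
where $h_k$ is the function defined in \eqref{eq:defhk}.

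Next I would apply Lemma \ref{lem:convex}: $h_k$ is jointly convex in its two arguments on $\Reals^{m\times n}\times \calS^n_{++}$. Since composition of a jointly convex function with an affine map is convex, and $(C,W)$ is an affine function of $\Pygx$, the map $\Pygx \mapsto h_k(C(\Pygx),W(\Pygx))$ is convex. Subtracting the constant $1$ preserves convexity, giving the theorem.

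The only subtle point is the positive definiteness requirement $W \in \calS^n_{++}$ in Lemma \ref{lem:convex}, which translates into the condition that $\Py$ is strictly positive on $\calY$. The main obstacle, such as it is, lies in handling symbols $y$ with $\py(y)=0$: I would dispose of this by restricting attention to the support of $Y$ (equivalently, assuming without loss of generality that $\py(y)>0$ for every $y\in\calY$), since any $y$ outside this support contributes nothing to $\Pxy$ or to its principal inertias and can be deleted from $\calY$ without changing $\calJ_k(X;Y)$. On the restricted support, $W(\Pygx)\in \calS^n_{++}$ for every admissible $\Pygx$, so Lemma \ref{lem:convex} applies directly and the convexity conclusion goes through.
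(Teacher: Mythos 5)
Your proof is correct and takes essentially the same route as the paper's: express $\calJ_k(X;Y)$ as $h_k$ composed with maps of $\Pygx$ that are affine for fixed $\px$, then invoke Lemma~\ref{lem:convex} and preservation of convexity under affine precomposition. You are in fact slightly more careful than the paper: you correctly identify the first argument as $D_X^{1/2}\Pygx$ (the paper writes $D_X\Pygx$, an apparent typo, though the distinction is immaterial since both are linear in $\Pygx$), and you explicitly address the $W\in\calS^n_{++}$ hypothesis of Lemma~\ref{lem:convex} by restricting to the support of $Y$, a point the paper leaves implicit.
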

\begin{proof}
  Note that $\calJ_k(X;Y) = h_k(D_X \Pygx,D_Y)-1$, where $h_k$ is defined in
  equation \eqref{eq:defhk}.  For a fixed $\px$, $D_Y$ is a
  linear combination of $\pygx$. Therefore, since $h_k$ is convex (Lemma
  \ref{lem:convex}), and composition with an affine mapping preserves convexity,
  the result follows.
\end{proof}

\subsubsection{A data processing result}
In the next theorem, we prove that the principal inertias satisfy the Data
Processing Inequality.

\begin{thm}
  \label{lem:dataProc}
Assume that $X'\rightarrow X \rightarrow Y$, where $X'$
is a discrete random variable with finite support. Let $\lambda_1,\lambda_2,\dots,\lambda_d$ and
$\lambda_1',\lambda_2',\dots,\lambda_d'$
denote the principal inertias of $\Pxy$ and $P_{X',Y}$, respectively. Then
$\lambda_1\geq \lambda_1',\lambda_2\geq \lambda_2',\dots,\lambda_d \geq
\lambda_d'$.
\end{thm}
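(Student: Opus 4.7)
The plan is to encode the Markov condition $X'\to X\to Y$ as a matrix factorization and invoke the classical fact that pre-multiplication by a contraction can only decrease every singular value. Concretely, the Markov assumption gives $P_{X',Y}=D_{X'}P_{X|X'}P_{Y|X}$, so after dividing by the square roots of the marginals
\begin{equation*}
D_{X'}^{-1/2}P_{X',Y}D_{Y}^{-1/2} \;=\; BA,\qquad B\defined D_{X'}^{1/2}P_{X|X'}D_{X}^{-1/2},\qquad A\defined D_{X}^{1/2}P_{Y|X}D_{Y}^{-1/2}.
\end{equation*}
By \eqref{eq:fullPxyDecomp} and Definition~\ref{defn:inertias} applied to $(X,Y)$ and to $(X',Y)$ separately, the ordered singular values of $A$ are $(1,\sqrt{\lambda_{1}},\ldots,\sqrt{\lambda_{d}})$ and those of $BA$ are $(1,\sqrt{\lambda_{1}'},\ldots,\sqrt{\lambda_{d}'})$, the leading $1$ in each case being the trivial singular direction carried by the marginals (with the $\lambda_{i}'$ padded by zero if $|\calX'|<|\calX|$).

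The crux is to show that $B$ is a contraction, i.e.\ $\|B\|_{\mathrm{op}}\le 1$. Setting $v=D_{X'}^{-1/2}u$, a direct expansion gives
\begin{equation*}
\|B^{T}u\|^{2} \;=\; \sum_{x}\frac{1}{p_{X}(x)}\Bigl(\sum_{x'}p_{X,X'}(x,x')\,v_{x'}\Bigr)^{\!2}.
\end{equation*}
Splitting $p_{X,X'}(x,x')=\sqrt{p_{X,X'}(x,x')}\cdot\sqrt{p_{X,X'}(x,x')}$ and applying Cauchy--Schwarz, together with $\sum_{x'}p_{X,X'}(x,x')=p_{X}(x)$, then gives
\begin{equation*}
\|B^{T}u\|^{2} \;\le\; \sum_{x,x'}p_{X,X'}(x,x')\,v_{x'}^{2} \;=\; \sum_{x'}p_{X'}(x')\,v_{x'}^{2} \;=\; \|u\|^{2}.
\end{equation*}
Conceptually this is just the statement that $B$ represents the Markov kernel $p_{X|X'}$ as an operator from $L^{2}(p_{X'})$ to $L^{2}(p_{X})$, which is a contraction by Jensen's inequality.

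With $\|B\|_{\mathrm{op}}\le 1$ in hand, the Courant--Fischer min--max characterization of singular values immediately yields $\sigma_{i}(BA)\le \|B\|_{\mathrm{op}}\sigma_{i}(A)\le \sigma_{i}(A)$ for every $i$. Reading off index by index from the singular-value lists above gives $\sqrt{\lambda_{i}'}\le \sqrt{\lambda_{i}}$, equivalently $\lambda_{i}'\le \lambda_{i}$ for every $i=1,\ldots,d$, which is exactly the claim. The main obstacle is the contraction bound on $B$; once that is secured the rest reduces to a routine application of the min--max formula.
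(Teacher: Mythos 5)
Your proof is correct, and it rests on the same essential observation as the paper's: the rescaled Markov kernel matrix (your $B$, which equals the paper's $E=D_{X'}^{-1/2}FD_X^{1/2}$) is a contraction in operator norm. The technical execution differs in two small but genuine ways. First, the paper works with the \emph{centered} matrices $S$ and $S'$ from \eqref{eq:SandP}, whose singular values are directly $\sqrt{\lambda_i}$ and $\sqrt{\lambda_i'}$, then shows $S^TS-S'^TS'\succeq 0$ and invokes Weyl's eigenvalue monotonicity theorem; you instead work with the \emph{uncentered} matrices $A$ and $BA$ and invoke the product inequality $\sigma_i(BA)\le\|B\|_{\mathrm{op}}\,\sigma_i(A)$, which requires aligning the singular-value lists by noting that the trivial direction carries the \emph{top} singular value $1$ in both cases (a standard fact, but one that should be stated since your indexing depends on it). Second, the paper establishes $\|E\|_{\mathrm{op}}=1$ by observing that $E$ itself has the same form as \eqref{eq:fullPxyDecomp} (so its largest singular value is $1$), whereas you prove the contraction bound directly by Cauchy--Schwarz on $\|B^Tu\|^2$. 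Both routes are valid and of comparable length; yours avoids the centering and Weyl's theorem at the cost of the extra bookkeeping about where the trivial singular value sits.
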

\begin{remark}
  This data processing result was also proved by Kang and Ulukus in
  \cite[Theorem 2]{kang_new_2011}, even though they do not make the explicit
  connection with maximal correlation and principal inertias. A weaker form of
  Theorem \ref{lem:dataProc} can be derived using a clustering result presented
  in \cite[Sec. 7.5.4]{greenacre_theory_1984} and originally due to Deniau \etal
  \cite{deniau_effet}. We use a different proof technique from the one in
  \cite[Sec. 7.5.4]{greenacre_theory_1984} and  \cite[Theorem 2]{kang_new_2011}
  to show result stated in the theorem, and present the  proof here for
  completeness. Finally, a related data processing result was stated in
  \cite[Eq. (31)]{yury_2013}.   
\end{remark}
\begin{proof}
  Assume without loss of generality that $\calX'=\{1,\dots,m'\}$  is the support
set of $X'$. Then $P_{X',Y}=F\Pxy $, where $F$ is a
$m'\times m$ column stochastic matrix. Note that $F$ represents the conditional
distribution of the mapping $X'\rightarrow X$, where the $(i,j)$-th
entry of $F$ is $p_{X'|X}(i|j)$.

Consider the decomposition of $P_{X',Y} = F\Pxy$:
\begin{align}
  S'&=  D_{X'}^{-1/2}\left( P_{X'Y}- \Pxp\Py^T \right)D_Y^{-1/2} \nonumber \\
& = D_{X'}^{-1/2}F \left( P_{XY}- \Px\Py^T \right)D_Y^{-1/2} \nonumber \\
  & =  D_{X'}^{-1/2}F D_X^{1/2} S \nonumber,
\end{align}
where $S$ is given by
\begin{align}
  S \defined D_X^{-1/2}\left( \Pxy- \Px\Py^T \right)D_Y^{-1/2}. \label{eq:SandP}
\end{align}
Note that the singular values of $S'$
are the principal inertias $\lambda_1',\dots,\lambda_d'$.

Let $E=D_{X'}^{-1/2}F D_X^{1/2} $, where
that the size of $E$ is $m'\times m$. Since $[FD_X]_{i,j}=p_{X',X}(i,j)$, then
the $(i,j)$-th entry of $E$ is
\begin{align*}
  [E]_{i,j} &=  \frac{p_{X',X}(i,j)}{\sqrt{p_{X'}(i)p_X(j)}}~.
\end{align*}
Observe that $E$ has the same form as \eqref{eq:fullPxyDecomp}, and, therefore,
$\KFnorm{E}{1}=1$. Let $H=  S^TS - S'^TS'$. Then for $\by\in \Reals^n$ and $S\by =
\bz$:
\begin{align*}
    \by^TH\by   &= \by^TS^TS\by - \by^TS'^TS'\by\\
    &= \normEuc{\bz} - \normEuc{E\bz}\\
    &\geq \normEuc{\bz}-\KFnorm{E}{1} \normEuc{\bz}\\
    &= 0.     
\end{align*}
Consequently, $H$ is positive semidefinite. Since $H$ is symmetric, it follows
from Weyl's theorem \cite[Theorem 4.3.1]{horn_matrix_2012} that for
$k=1,\dots,n$,
\begin{align*}
\Lambda_k(S'^TS') &\leq \Lambda_k(S'^TS' + H) \\
                  &= \Lambda_k(S^TS)\\
                  &= \lambda_k.
\end{align*}
Since $\Lambda_k(S'^TS')= \lambda'_k$, the result follows.

%
\end{proof}

The next corollary is a direct consequence of the previous theorem. 
\begin{cor}
    For $X' \rightarrow X \rightarrow Y$ forming a Markov chain, $J_k(X';Y)\leq J_k(X;Y)$.
\end{cor}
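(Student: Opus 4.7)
The plan is to invoke Theorem \ref{lem:dataProc} directly and then sum, since the $k$-correlation is by definition the sum of the first $k$ principal inertias. Specifically, let $\lambda_1 \geq \lambda_2 \geq \dots \geq \lambda_d$ denote the principal inertias of $\Pxy$ and $\lambda_1' \geq \lambda_2' \geq \dots \geq \lambda_d'$ the principal inertias of $P_{X',Y}$, where $d = \min(m-1, n-1)$ (with the appropriate convention that any missing inertias are set to zero when the support sizes differ). Under the Markov condition $X' \rightarrow X \rightarrow Y$, Theorem \ref{lem:dataProc} yields $\lambda_i \geq \lambda_i'$ for every $i = 1, \dots, d$.

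Summing these inequalities over $i = 1, \dots, k$ and applying the definition $\calJ_k(X;Y) = \sum_{i=1}^k \lambda_i$ gives
\begin{equation*}
\calJ_k(X';Y) = \sum_{i=1}^k \lambda_i' \leq \sum_{i=1}^k \lambda_i = \calJ_k(X;Y),
\end{equation*}
which is the claim.

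There is essentially no obstacle, since all the work has been done in Theorem \ref{lem:dataProc}. The only minor subtlety worth a one-line remark is the case where $|\calX'| \neq |\calX|$, so that the two lists of principal inertias have different nominal lengths; this is handled by padding the shorter list with zeros, which is consistent with the convention used throughout the paper and does not affect the inequality term-by-term. The corollary then follows immediately.
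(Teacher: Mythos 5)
Your proof is correct and takes exactly the approach the paper intends: the corollary is stated as a direct consequence of Theorem~\ref{lem:dataProc}, and you supply the obvious step of summing the term-by-term inequalities $\lambda_i' \leq \lambda_i$ over $i=1,\dots,k$. The remark about padding with zeros when $|\calX'| \neq |\calX|$ is a sensible clarification and consistent with the paper's conventions.
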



\section{A lower bound for the estimation error probability in terms of the principal
  inertias}
  \label{sec:fanoinertia}
Throughout the rest of the paper, we assume without loss of generality that
$\px$ is sorted in decreasing order, i.e. $\px(1)\geq\px(2)\geq \dots\geq
\px(m)$.

\begin{defn}
  Let $\bLambda(\Pxy)$ denote the vector of principal inertias of a joint
  distribution matrix $\Pxy$ sorted in decreasing order, i.e.
  $\bLambda(\Pxy)=(\lambdat_1,\dots,\lambdat_d)$.  We denote $\bLambda(\Pxy)\leq
  \blambda $ if $\lambdat_1\leq \lambda_1,\dots,\lambdat_d\leq \lambda_d$ and
  \begin{equation}
    \mathcal{R}(\Qx,\blambda)\triangleq\left\{\Pxy\big| \Px=\Qx\mbox{ and } \bLambda(\Pxy)\leq
  \blambda   \right\}.
\end{equation}
  \end{defn}

In the next theorem we present a Fano-like bound for the estimation error
probability of $X$ that depends on the marginal distribution $\px$ and on the
principal inertias. 
\begin{thm}
  \label{thm:Bound}
 For $\blambda = (\lambda_1,\dots,\lambda_d)$, define 
  \begin{align}
    k^* \defined \max \left\{ k\in \{1,\dots,m\} ~\big|~\px(k)-\Px^T\Px\geq
     0\right\}~, \label{eq:kstar}
   \end{align}
   \begin{align*}  
   f^*_0(\Px,\blambda)\defined&  \sum_{i=1}^{k^*}\lambda_i \px(i)+
    \sum_{i=k^* + 1}^{m}\lambda_{i-1} \px(i) \nonumber
    \\& - \lambda_{k^*} \Px^T\Px~, 
  \end{align*}     
\begin{align*}
    g_0(\beta,\Px,\blambda)&\defined f^*_0(\Px,\blambda)+\sum_{i=1}^m\left(\left[\px(i)-\beta\right]^+\right)^2,\\
    U_0(\beta,\Px,\blambda)&\defined \beta+
    \sqrt{ g_0(\beta,\Px,\blambda)},\\
    U_1(\Px,\blambda) &\defined \min_{0\leq\beta \leq \px(2)} U_0(\beta,\Px,\blambda).
  \end{align*}
Then
for any $\Pxy\in \mathcal{R}(\Px,\blambda)$, 
\begin{equation}
   P_e \geq 1-  U_1(\Px,\blambda).
   \label{eq:mainBound}
\end{equation}
\end{thm}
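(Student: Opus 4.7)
The plan is to reduce the problem to a trace inequality via the data processing property (Theorem~\ref{lem:dataProc}), use the spectral decomposition~\eqref{eq:comactDecompPxy} together with the von Neumann trace inequality to control the ``correlated'' component of $P_{X,\Xh}$, and then introduce the threshold $\beta$ to split the resulting bound into the form $\beta+\sqrt{g_0}$.

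First, I would observe that for any estimator $\Xh=h(Y)$, after relabeling so $\Xh\in\calX$, we have $1-P_e=\Pr\{X=\Xh\}=\Tr{P_{X,\Xh}}$. Since $X\to Y\to \Xh$ is a Markov chain, Theorem~\ref{lem:dataProc} gives $\bLambda(P_{X,\Xh})\leq \bLambda(\Pxy)\leq\blambda$. It therefore suffices to upper bound $\Tr{Q}$ over $m\times m$ matrices $Q$ with row sums $\px$, column sums $\hat p$ summing to $1$, and principal inertias componentwise $\leq\blambda$.

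Next, using~\eqref{eq:comactDecompPxy}, decompose $Q=\Px\hat p^{T} + D_X^{1/2}\tilde S D_{\hat p}^{1/2}$, where the nonzero singular values of $\tilde S$ are $\sqrt{\tilde\lambda_k}\leq\sqrt{\lambda_k}$. Writing
\begin{equation*}
\Tr{Q}=\sum_{i=1}^m \px(i)\hat p(i) + \Tr{(D_X D_{\hat p})^{1/2}\tilde S},
\end{equation*}
the von Neumann trace inequality paired with the rearrangement inequality bounds the second trace by $\sum_{k=1}^{m-1}\sqrt{\lambda_k\,\px(k)\,\hat p(k)}$. For any $\beta\in[0,\px(2)]$, the pointwise bound $\px(i)\hat p(i)\leq \beta\hat p(i)+[\px(i)-\beta]^+\hat p(i)$ together with $\sum_i \hat p(i)=1$ yields
\begin{equation*}
\Tr{Q}-\beta \;\leq\; \sum_{i=1}^m [\px(i)-\beta]^+\hat p(i) + \sum_{k=1}^{m-1}\sqrt{\lambda_k\,\px(k)\,\hat p(k)}.
\end{equation*}
A Cauchy--Schwarz step that groups the common $\sqrt{\hat p(i)}$ factors on the right-hand side, followed by maximization over valid $\hat p$ (i.e. $\sum_i\hat p(i)=1$, $\hat p(i)\geq 0$), will give the claimed $(\Tr{Q}-\beta)^2\leq g_0(\beta,\Px,\blambda)$. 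The specific form of $f^*_0$---with weights $\lambda_i$ for $i\leq k^*$, $\lambda_{i-1}$ for $i>k^*$, and the correction $-\lambda_{k^*}\|\Px\|^2$---should arise from the KKT conditions of this optimization: the index $k^*$ of~\eqref{eq:kstar} separates coordinates at which the maximizer $\hat p^*(i)$ is ``active'' and tied to $\px(i)$ from those where the rearrangement in the von Neumann bound forces a shift to the next principal inertia, which also accounts for the ``doubling'' of $\lambda_{k^*}$ at the boundary.

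Taking the infimum over $\beta\in[0,\px(2)]$ then yields $1-P_e\leq U_1(\Px,\blambda)$, which is the claim. The main technical obstacle will be the final optimization over $\hat p$: the rearrangement in the von Neumann inequality depends on the sorted ordering of $\sqrt{\px(i)\hat p(i)}$, which itself varies with $\hat p$, and carefully matching the sorted $\sqrt{\lambda_k}$'s with these products (and tracking how the restriction $\beta\leq\px(2)$ ensures the relevant Lagrange multiplier lies in the correct regime) is what produces the precise split at $k^*$ and hence the exact expression for $f^*_0$.
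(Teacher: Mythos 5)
Your high-level strategy (data processing to reduce to a trace bound on $P_{X,\Xh}$, decomposing via \eqref{eq:comactDecompPxy}, von Neumann, then a $\beta$-split) is in the same spirit as the paper, but the way you propose to combine these steps is structurally different from the paper's argument and contains a real gap. The paper does \emph{not} apply von Neumann directly to $\Tr{(D_X D_{\hat p})^{1/2}\tilde S}$. Instead, it first applies Cauchy--Schwarz \emph{twice} to the bilinear form $\sum_{k,i}\lambdat_{k-1}^{1/2}\tu_{k,i}\tv_{k,i}$, which cleanly factors $P_c$ into $\Px^T\Pxp$ plus $\bigl(1-\|\Pxp\|^2\bigr)^{1/2}\bigl(\Tr{\Sigma\,\olU^T D_X\olU}\bigr)^{1/2}$; the von Neumann step is then applied to $\Tr{\Sigma\,\olU^T D_X \olU}$, whose eigenvalues $\sigma_k$ satisfy both a sum constraint (trace) and interlacing constraints $\px(i+1)\leq\sigma_i\leq\px(i)$ from Cauchy's interlacing theorem (because $\olU^T D_X\olU$ is a principal submatrix of $U^T D_X U$). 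The precise expression for $f_0^*$, including the $k^*$-dependent split between $\lambda_i$ and $\lambda_{i-1}$ weights and the $-\lambda_{k^*}\Px^T\Px$ correction, is the LP dual value of the resulting linear program over $\sigma_1,\dots,\sigma_d$, optimized over the dual multiplier $\alpha$. There is no analogue of the interlacing constraint or the LP dual in your proposal, and this is exactly the machinery that produces $f_0^*$; your ``Cauchy--Schwarz grouping $\sqrt{\hat p(i)}$ and maximizing over $\hat p$'' step would not recover it.

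There is also a concrete error in your intermediate bound: von Neumann gives $\Tr{(D_XD_{\hat p})^{1/2}\tilde S}\leq\sum_k\sqrt{\tilde\lambda_k}\,c_{(k)}$ where $c_{(1)}\geq\dots\geq c_{(m)}$ are the \emph{sorted} values $\sqrt{\px(i)\hat p(i)}$, and by the rearrangement inequality $\sum_k\sqrt{\lambda_k}\,c_{(k)}\geq\sum_k\sqrt{\lambda_k\px(k)\hat p(k)}$, so the quantity $\sum_k\sqrt{\lambda_k\px(k)\hat p(k)}$ you write is in general \emph{smaller} than the actual von Neumann bound, not a valid upper bound for the trace, unless the products $\px(k)\hat p(k)$ happen already to be sorted decreasing. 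You flag the ordering issue as a ``main technical obstacle'' to be sorted out, but it is not merely an obstacle: as written the claimed inequality is in the wrong direction, and resolving the ordering is precisely what the paper's interlacing-plus-LP argument accomplishes in a different coordinate system (on $\sigma_k$ rather than on $\px(i)\hat p(i)$). To obtain \eqref{eq:mainBound} you would need to replicate those steps, which amounts to following the paper's proof.
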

\begin{proof}
The proof of the theorem is presented in the appendix. 
\end{proof}

\begin{remark}
  If  $\lambda_i=1 $ for all $1\leq i \leq d$,   \eqref{eq:mainBound}
  reduces to $P_e \geq 0$. Furthermore, if  $\lambda_i=0$ for all $1\leq i \leq d$,  \eqref{eq:mainBound}
  simplifies to $P_e \geq 1- \px(1)$.
\end{remark}

We now present a few direct but powerful corollaries of the result in Theorem
\ref{thm:Bound}. We note that a  bound similar to \eqref{eq:gunt} below has
appeared in the context of bounding the minmax decision risk in
\cite[(3.4)]{guntuboyina_minimax_2011}. However, the proof technique used in
\cite{guntuboyina_minimax_2011} does not lead to the general bound presented in
Theorem \ref{thm:Bound}.

\begin{cor}
  If $X$ is uniformly distributed in $\{1,\dots,m\}$, then 
  \begin{equation}
    P_e \geq 1-\frac{1}{m}-\frac{\sqrt{(m-1)\chi^2}}{m}~.
    \label{eq:gunt}
  \end{equation}
  Furthermore, if only the maximal correlation $\rho_m(X;Y)=\sqrt{\lambda_1}$ is
  given, then
    \begin{align*}
      P_e &\geq 1- \frac{1}{m} -\sqrt{\lambda_1}\left(1-\frac{1}{m}\right)\\
                &=1-\frac{1}{m}-\rho_m(X;Y)\left(1-\frac{1}{m}\right).
    \end{align*}
\end{cor}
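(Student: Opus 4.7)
The plan is to specialize Theorem~\ref{thm:Bound} to the uniform case and carry out the single-variable minimization defining $U_1$; the second inequality then follows from a trivial bound on $\chi^2$ in terms of $\lambda_1$.

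First I would evaluate each ingredient of Theorem~\ref{thm:Bound} under $p_X(i)=1/m$. Then $\Px^T\Px=1/m$, so $p_X(k)-\Px^T\Px=0$ for every $k$ and $k^\ast=m$ (or equivalently $k^\ast=d$; either choice gives the same value below under the convention $\lambda_i=0$ for $i>d$). The second sum defining $f^*_0$ is empty, and using $\calJ_d(X;Y)=\sum_{i=1}^d\lambda_i=\chi^2$ the remaining terms collapse to $f^*_0(\Px,\blambda)=\chi^2/m$.

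Next I would handle the minimization. For $0\leq\beta\leq p_X(2)=1/m$,
\begin{align*}
U_0(\beta,\Px,\blambda)=\beta+\sqrt{\tfrac{\chi^2}{m}+m\bigl(\tfrac{1}{m}-\beta\bigr)^2}.
\end{align*}
Solving $\partial U_0/\partial\beta=0$ is a routine single-variable calculus exercise yielding an interior critical point $\beta^\ast=\tfrac{1}{m}-\tfrac{1}{m}\sqrt{\chi^2/(m-1)}$, which lies in $[0,1/m]$ since $\chi^2\leq d\leq m-1$. Back-substitution and simplification give $U_1(\Px,\blambda)=\tfrac{1}{m}+\tfrac{\sqrt{(m-1)\chi^2}}{m}$, and Theorem~\ref{thm:Bound} produces the first inequality.

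For the second inequality I would use $\lambda_i\leq\lambda_1$ to place $\Pxy$ in $\mathcal{R}(\Px,(\lambda_1,\ldots,\lambda_1))$ and apply the first inequality with $\chi^2$ replaced by the valid surrogate $(m-1)\lambda_1$; writing $\sqrt{(m-1)^2\lambda_1}/m=(1-1/m)\sqrt{\lambda_1}$ then yields the stated bound in terms of $\rho_m(X;Y)=\sqrt{\lambda_1}$. The main obstacle is purely bookkeeping: making $k^\ast=m$ consistent with the indexing $\lambda_1,\ldots,\lambda_d$, which is resolved by the convention $\lambda_i=0$ for $i>d$ (or, equivalently, by checking that $k^\ast=d$ produces the same $f^*_0$ and hence the same $U_1$).
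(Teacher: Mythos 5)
Your derivation is correct and is exactly the direct specialization of Theorem~\ref{thm:Bound} the paper intends (the paper states this as one of its ``direct'' corollaries without a written proof). The evaluation $f^*_0=\chi^2/m$, the interior minimizer $\beta^\ast=\tfrac{1}{m}\bigl(1-\sqrt{\chi^2/(m-1)}\bigr)\in[0,1/m]$ (using $\chi^2\leq m-1$), and the substitution $\chi^2\to(m-1)\lambda_1$ via the monotone set $\mathcal{R}(\Px,(\lambda_1,\ldots,\lambda_1))$ all check out, and the $\lambda_m=0$ convention you invoke for $k^\ast=m$ is the one the paper itself adopts in Lemma~\ref{lem:f0tight}.
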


\begin{cor}
  \label{cor:coolBounds}
    For any pair of variables $(X,Y)$ with marginal distribution in $X$ equal to
    $\px$ and maximal correlation (largest
    principal inertia)   $\rho_m^2(X;Y)= \lambda_1$, we have for all
    $\beta\geq 0$
    \begin{equation}
      P_e \geq 1-  \beta -
      \sqrt{\lambda_1\left(1-\sum_{i=1}^m\px^2(i)\right)+\sum_{i=1}^m\left(\left[\px(i)-\beta\right]^+\right)^2}~.
    \label{eq:coolBound1}
    \end{equation}
In particular, setting $\beta = \px(2)$,
     \begin{align}
      P_e &\geq  1-  \px(2) -
      \sqrt{\lambda_1\left(1-\sum_{i=1}^m\px^2(i)\right)+\left(\px(1)-\px(2)\right)^2}
      \nonumber \\
      &\geq 1-  \px(1)-
      \rho_m(X;Y)\sqrt{\left(1-\sum_{i=1}^m\px^2(i)\right)}~.
      \label{eq:coolBound2}
    \end{align}
\end{cor}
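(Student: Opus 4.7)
The plan is to read off both bounds as direct consequences of Theorem \ref{thm:Bound} applied under the conservative specialization $\lambda_2 = \lambda_3 = \dots = \lambda_d = \lambda_1$. Since only $\lambda_1$ is given, and since the ordering $\lambda_1 \geq \lambda_2 \geq \dots \geq \lambda_d$ forces $\lambda_i \leq \lambda_1$ for every $i$, this coordinatewise upper bound on $\blambda$ produces the weakest (and therefore universally valid) instance of the theorem. Concretely, since $P_e \geq 1 - U_0(\beta, \Px, \blambda)$ and $U_0$ is monotone nondecreasing in each $\lambda_i$, replacing every $\lambda_i$ by $\lambda_1$ can only enlarge $U_0$, preserving the direction of the inequality.

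First I would verify the monotonicity of $f_0^*(\Px,\blambda)$ in each $\lambda_i$. Reading off the formula, the coefficient of $\lambda_i$ is $p_X(i)$ for $i < k^*$, equal to $p_X(k^*) + p_X(k^*+1) - \Px^T\Px$ for $i = k^*$, and $p_X(i+1)$ for $i > k^*$; all three are nonnegative, with the $i = k^*$ case guaranteed by the defining property \eqref{eq:kstar} of $k^*$. Substituting $\lambda_i = \lambda_1$ for every $i$ then collapses the two sums in $f_0^*$ into $\lambda_1 \sum_{i=1}^{m} p_X(i) = \lambda_1$, and turns the correction term into $\lambda_1 \Px^T\Px$, giving $f_0^*(\Px,\blambda) = \lambda_1\bigl(1 - \sum_i p_X^2(i)\bigr)$. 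Plugging this into the definition of $U_0$ reproduces exactly the right-hand side of \eqref{eq:coolBound1}, valid for any $\beta$ in the feasible range of Theorem \ref{thm:Bound}.

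Next I would specialize to $\beta = p_X(2)$. Because $p_X$ is sorted in decreasing order, $[p_X(i) - p_X(2)]^+$ vanishes for every $i \geq 2$, so the residual sum $\sum_i \bigl([p_X(i) - \beta]^+\bigr)^2$ reduces to the single term $(p_X(1) - p_X(2))^2$; this is precisely the first displayed inequality of \eqref{eq:coolBound2}. The second inequality of \eqref{eq:coolBound2} then follows from the elementary bound $\sqrt{a^2 + b^2} \leq a + b$ for $a, b \geq 0$, applied with $a = p_X(1) - p_X(2)$ and $b = \sqrt{\lambda_1(1 - \sum_i p_X^2(i))}$. Subtracting the resulting inequality from $1 - p_X(2)$ cancels $p_X(2)$ against $-a$ and leaves the advertised expression $1 - p_X(1) - \rho_m(X;Y)\sqrt{1 - \sum_i p_X^2(i)}$.

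I do not anticipate a serious obstacle; the corollary is a mechanical consequence of Theorem \ref{thm:Bound}, and the only subtle points are checking monotonicity of $f_0^*$ in the $\lambda_i$ (which hinges on the definition of $k^*$) and justifying the substitution $\beta = p_X(2)$ into the bound. The genuine technical content lies in Theorem \ref{thm:Bound} itself, whose proof is deferred to the appendix.
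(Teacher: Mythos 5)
Your proof is correct and takes the natural route: the paper gives no explicit argument for this corollary (it is introduced only as a ``direct but powerful corollary'' of Theorem~\ref{thm:Bound}), and the specialization $\lambda_i \mapsto \lambda_1$ for all $i$ followed by $\beta = \px(2)$ and $\sqrt{a^2+b^2}\le a+b$ is exactly the intended derivation. A small simplification worth noting: the monotonicity-of-$U_0$ argument is sound but not strictly necessary, since by the ordering $\lambda_i\le\lambda_1$ one has $\Pxy\in\mathcal{R}(\Px,(\lambda_1,\dots,\lambda_1))$ directly, so Theorem~\ref{thm:Bound} can be invoked with $\blambda=(\lambda_1,\dots,\lambda_1)$ without first checking coefficients; your computation that $f_0^*$ then collapses to $\lambda_1\bigl(1-\Px^T\Px\bigr)$, your observation that the coefficient of $\lambda_{k^*}$ is $\px(k^*)+\px(k^*+1)-\Px^T\Px\ge 0$ by the defining property of $k^*$, and the final algebra are all accurate.
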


\begin{remark} 
  The bounds \eqref{eq:coolBound1} and  \eqref{eq:coolBound2}   are
  particularly insightful in showing how the error probability scales with the
  input distribution and the maximal correlation. For a given $\pxy$, recall
  that  $\Adv(X;Y)$, defined in \eqref{eq:advGuess}, is the advantage of
  correctly estimating $X$ from an observation of $Y$ over a random guess of $X$
  when $Y$ is unknown.  Then, from equation \eqref{eq:coolBound2}
  \begin{align*}
        \Adv(X;Y) &\leq \rho_m(X;Y)\sqrt{\left(1-\sum_{i=1}^m\px^2(i)\right)} \\
        &= \bigO(\rho_m(X;Y)).
  \end{align*}
  Therefore, the advantage of estimating $X$ from $Y$ decreases at least linearly with
  the maximal correlation between $X$ and $Y$.
\end{remark}

\section{Lower bounds on estimating functions}
\label{sec:functions}

For any function $f:\calX\rightarrow \calU$, we denote by $\hat{f}$ the maximum-likelihood
estimator of $f(X)$ given an observation of $Y$. For a given integer $1\leq M\leq |\calX|$, we define
\begin{equation*}
  \calF_{M} \defined \left\{ f:\calX\rightarrow \calU~ \big|~f\mbox{ is
  surjective and }|\calU|=M
\right\}
\end{equation*}
and
\begin{equation}
  P_{e,M} \defined \min_{f\in \calF_M} \Pr\{f(X)\neq \hat{f}\}.
  \label{eq:PeM}
\end{equation}
$P_{e,|\calX|}$ is simply the error probability of estimating $X$ from
$Y$, i.e.  $P_{e,|\calX|}=P_e$. 

Throughout this section we adopt the following additional assumption.
\begin{assumptions}
  \label{assump:func}
An upper bound $\theta$ for a given measure of information $\calI(X;Y)$ between
$X$ and $Y$ is given, i.e.  $\calI(X;Y)\leq \theta$. Furthermore, $\calI(X;Y)$
satisfies the Data Processing Inequality, is convex in $\pygx$ for a given
$\px$, and is invariant to row and column permutations of the joint
distribution matrix $\pxy$. Finally, we also assume that the marginal
distribution of $X$, given by $\px$, is  known.
\end{assumptions}

Under this assumption, what can be said about $P_{e,M}$?  In the next sections
we present  a general procedure to derive non-trivial lower bounds for $P_{e,M}$.
\subsection{Extremal properties of the error-rate function}
Before investigating how to bound $P_{e,M}$, we first study how to bound  $P_e$
in a more general setting than the one in section \ref{sec:fanoinertia}. Note
that  
\begin{align*}
  P_e \geq &  \min_{\Pygx,E}~ 1-\Tr{ D_X\Pygx E}\\
  &~\sto~ \calI(\px,\Pygx)\leq \theta,~ \Pygx \in \calT_{m,n},~E \in
  \calT_{n,m}.  \nonumber 
\end{align*}
Here $E$ denotes the mapping from $Y$ to $\hat{X}$. By fixing $\Pygx$ and taking
the dual in $E$ of the previous convex program, we can verify that $E$ will
always be a row-stochastic matrix with entries equal to 0 or 1. Since
$\calI(X;Y)$ satisfies the Data Processing Inequality, $P_e \geq
e_\calI(\px,\theta)$, where $e_\calI(\px,\theta)$ is defined below.  

\begin{defn}
The \textit{error-rate function} $e_\calI(\px,\theta)$ is the
solution of the following convex program:
\begin{align}
  e_{\calI}(\px,\theta) \defined&  \min_{\Pxhgx}~ 1-\Tr{ D_X\Pxhgx} \label{eq:minPe} \\
  &~\sto~ \calI(\px,\Pxhgx)\leq \theta,~\Pxhgx \in \calT_{m,m}~.\nonumber
  \end{align}
\end{defn}

Due to convexity of $\calI(\px,\Pxhgx)$ in $\Pxhgx$, it follows directly that
$e_{\calI}(\px,\theta) $ is convex in $\theta$ for a fixed $\px$. Furthermore,
the cost function \eqref{eq:minPe} is equal to the average Hamming distortion
$\ExpVal{X,\hat{X}}{d_H(X,\hat{X})}$ between $X$ and $\hat{X}$.
Therefore, $e_\calI(\px,\theta)$ has a dual relationship\footnote{The authors thank
Prof. Yury Polyansky (MIT) for pointing out the dual relationship.} with the
rate-distortion problem 
\begin{align*}
  R_\calI(\px,\Delta)\defined& \min_{\Pxhgx} ~  \calI(\px,\Pxhgx)\\
  & \sto ~  \ExpVal{X,\hat{X}}{d_H(X,\hat{X})} \leq \Delta,  ~ \Pxhgx \in \calT_{m,m}. \nonumber
\end{align*}

We will now prove that, for a fixed $\theta$ (respectively, fixed $\Delta$),
$e_\calI(\px,\theta)$ (resp. $R_\calI(\px,\Delta)$) is \textit{Schur-concave}
in
$\px$ if $\calI(\px,\Pygx)$ is concave in $\px$ for a fixed $\Pygx$. Ahlswede
\cite[Theorem 2]{ahlswede_extremal_1990} proved this result for the particular
case where $\calI(X;Y)=I(X;Y)$  by investigating the properties of the explicit
characterization of the rate-distortion function under Hamming distortion. The
proof presented here is considerably simpler and more general, and is based on a
proof technique used by Ahlswede in \cite[Theorem 1]{ahlswede_extremal_1990}.

\begin{thm}
  \label{thm:schur}
If $\calI(\px,\Pygx)$ is concave in $\px$ for a fixed $\Pygx$, then
$e_\calI(\px,\theta)$ and $R_\calI(\px,\Delta)$ are Schur-concave in $\px$ for a
fixed $\theta$ and $\Delta$, respectively.
\end{thm}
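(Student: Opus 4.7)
The plan is to establish Schur-concavity by combining two properties of the value functions: symmetry in $\px$, and concavity in $\px$. On the probability simplex, a symmetric concave function is automatically Schur-concave, because by the Hardy-Littlewood-Polya characterization of majorization together with Birkhoff's theorem every $\qx\preceq\px$ can be written as $\qx=\sum_{k}\alpha_{k}\Pi_{k}\px$ for some convex combination of permutation matrices $\Pi_{k}$, whence symmetry gives $f(\qx)\geq\sum_{k}\alpha_{k}f(\Pi_{k}\px)=f(\px)$.

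Symmetry will be the easy part. For any permutation matrix $\Pi$, the substitution $\Pxhgx\mapsto\Pi\Pxhgx\Pi^{T}$ is a bijection between the feasible sets for $\px$ and $\Pi\px$ that preserves the objective: by the assumed invariance of $\calI$ under simultaneous row-column permutations, $\calI(\Pi\px,\Pi\Pxhgx\Pi^{T})=\calI(\px,\Pxhgx)$, and by cyclicity of the trace $\Tr{D_{\Pi\px}\,\Pi\Pxhgx\Pi^{T}}=\Tr{\Pi D_{\px}\Pxhgx\Pi^{T}}=\Tr{D_{\px}\Pxhgx}$. Hence $e_{\calI}(\Pi\px,\theta)=e_{\calI}(\px,\theta)$, and analogously $R_{\calI}(\Pi\px,\Delta)=R_{\calI}(\px,\Delta)$.

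For concavity of $e_{\calI}(\cdot,\theta)$ in $\px$, I would take $\px=\alpha\px^{(1)}+(1-\alpha)\px^{(2)}$ and any channel $\Pxhgx$ that is feasible for $\px$, so that $\calI(\px,\Pxhgx)\leq\theta$. Because the objective $1-\Tr{D_{\px}\Pxhgx}$ is linear in $\px$, it decomposes as $\alpha c_{1}+(1-\alpha)c_{2}$ with $c_{i}=1-\Tr{D_{\px^{(i)}}\Pxhgx}$. Concavity of $\calI$ in $\px$ for fixed $\Pxhgx$ gives, for $\theta_{i}\triangleq\calI(\px^{(i)},\Pxhgx)$, the average feasibility $\alpha\theta_{1}+(1-\alpha)\theta_{2}\leq\calI(\px,\Pxhgx)\leq\theta$; and since $\Pxhgx$ itself realizes cost $c_{i}$ under $\px^{(i)}$ with $\calI$-value $\theta_{i}$, one has $c_{i}\geq e_{\calI}(\px^{(i)},\theta_{i})$. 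Combining these with the convexity and monotonicity of $e_{\calI}(\px^{(i)},\cdot)$ in $\theta$ (both consequences of convexity of $\calI$ in $\Pxhgx$) yields $\alpha c_{1}+(1-\alpha)c_{2}\geq\alpha e_{\calI}(\px^{(1)},\theta)+(1-\alpha)e_{\calI}(\px^{(2)},\theta)$; taking the infimum over feasible $\Pxhgx$ gives $e_{\calI}(\px,\theta)\geq\alpha e_{\calI}(\px^{(1)},\theta)+(1-\alpha)e_{\calI}(\px^{(2)},\theta)$. The argument for $R_{\calI}(\cdot,\Delta)$ is structurally identical, with the linearity of the Hamming distortion in $\px$ and the convexity of $R_{\calI}$ in $\Delta$ playing the analogous roles.

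I expect the consolidation step to be the main obstacle: transferring the bounds $c_{i}\geq e_{\calI}(\px^{(i)},\theta_{i})$, which live at possibly unequal thresholds $\theta_{i}$ (whose average is only required to be $\leq\theta$), into uniform-threshold bounds $e_{\calI}(\px^{(i)},\theta)$. This is precisely where both hypotheses of the theorem — concavity of $\calI$ in $\px$ for the average feasibility inequality, and convexity of $\calI$ in $\Pxhgx$ for the convexity-in-$\theta$ of the value function — are used in tandem, which is why neither alone suffices.
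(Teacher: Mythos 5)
Your framework---symmetry plus concavity in $\px$ implies Schur-concavity---is a valid strategy \emph{if} the concavity claim holds, but your proof of concavity has a genuine gap precisely at the consolidation step you flagged as ``the main obstacle.'' You need to pass from $\alpha\,e_\calI(\px^{(1)},\theta_1)+(1-\alpha)\,e_\calI(\px^{(2)},\theta_2)$ (with $\alpha\theta_1+(1-\alpha)\theta_2\leq\theta$) to $\alpha\,e_\calI(\px^{(1)},\theta)+(1-\alpha)\,e_\calI(\px^{(2)},\theta)$, and convexity plus monotonicity of the two one-dimensional maps $e_\calI(\px^{(i)},\cdot)$ do \emph{not} deliver this when $\px^{(1)}\neq\px^{(2)}$. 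The obstacle is that these are two \emph{different} convex non-increasing functions: if, say, $\theta_1>\theta>\theta_2$, the gain in $e_\calI(\px^{(2)},\cdot)$ from moving $\theta_2$ down to $\theta_2$ must offset the loss in $e_\calI(\px^{(1)},\cdot)$ from $\theta_1>\theta$, which requires a slope comparison that is simply not available. Concretely, take $g_1(\theta)=\max(0,1-10\theta)$, $g_2\equiv 0$, $\alpha=1/2$, $\theta=0.05$, $\theta_1=0.1$, $\theta_2=0$: then $\tfrac12 g_1(\theta_1)+\tfrac12 g_2(\theta_2)=0<0.25=\tfrac12 g_1(\theta)+\tfrac12 g_2(\theta)$. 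So the implication you invoke is false in general, and your argument establishes at most concavity of $e_\calI(\cdot,\theta)$ along permutation orbits, not on the full simplex. Full concavity is also a strictly stronger property than the Schur-concavity you actually need.

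The paper's proof never attempts full concavity and sidesteps the issue entirely. It writes $\qx=\sum_i a_i\pi_i\px$ as a convex combination of \emph{permutations} of $\px$ and then absorbs each permutation into the channel $A$. After this absorption every intermediate error-rate evaluation is of the form $e_\calI(\px,\theta_i)$, i.e.\ the \emph{same} distribution $\px$ throughout, so Jensen's inequality applies to a single convex function:
\begin{equation*}
\sum_i a_i\, e_\calI(\px,\theta_i)\;\geq\; e_\calI\Bigl(\px,\textstyle\sum_i a_i\theta_i\Bigr)\;\geq\; e_\calI(\px,\theta),
\end{equation*}
and the chain $e_\calI(\qx,\theta)\geq\cdots\geq e_\calI(\px,\theta)$ goes through. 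If you want to salvage your symmetric-plus-concave framing, you would need to prove concavity of $e_\calI(\cdot,\theta)$ over the convex hull of the permutation orbit of a fixed $\px$ directly (not just pairwise between two permutations), and the cleanest way to do that is exactly the paper's calculation; at that point the ``symmetric plus concave'' packaging is no longer buying you anything.
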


\begin{proof}
Consider two probability distributions $\px$ and $\qx$ defined over
$\calX=\{1,\dots,m\}$. As usual, let $\px(1)\geq\px(2)\geq \dots\geq \px(m)$ and
$\qx(1)\geq \qx(2)\geq \dots\geq \qx(m)$.  Furthermore, assume that $\px$
majorizes $\qx$, i.e. $\sum_{i=1}^k \qx(i)\leq \sum_{i=1}^k \px(i)$ for $1\leq
k \leq m$. Therefore $q_X$ is a convex combination of permutations of $\px$
\cite{marshall_inequalities:_2011}, and can be written as $q_Z = \sum_{i=1}^l
a_i \pi_i \px$ for some $l\geq 1$, where $a_i\leq 0$, $\sum a_i = 1$ and $\pi_i$
are permutation operators, i.e. $\pi_i \px = p_{\pi_iX}$. Hence, for a fixed
$A\in \calT_{m,n}$:
\begin{align*}
\calI(\qx,A)&=\calI\left(  \sum_{i=1}^l a_i \pi_i \px,A\right)\\
             &\leq \sum_{i=1}^l a_i \calI(\pi_i \px,A),\\
             & =  \sum_{i=1}^l a_i \calI( \px,\pi_i A\pi_i),
\end{align*}
where the inequality follows from the concavity assumption and from $\calI(X;Y)$
being invariant to row and column permutations of the joint distribution matrix
$\pxy$. Consequently, from equation \eqref{eq:minPe}, 
\begin{align*}
  e_\calI(\qx,\theta)&= \inf_{A\in \calT_{m,m}}\left\{ 1- \sum_{i=1}^l a_i\Tr{D_X \pi_i
  A\pi_i}) :\right.\\
  & \hspace{0.7in} \left. \sum_{i=1}^l a_i \calI( \px,\pi_i A\pi_i)\leq \theta \right\} \\
  &\geq  \inf_{A_1,\dots,A_l\in \calT_{m,m}}\left\{ \sum_{i=1}^l a_i(1- \Tr{D_X
  A_i)} :\right.\\
  & \hspace{0.9in} \left. \sum_{i=1}^l a_i \calI( \px,A_i)\leq \theta \right\}\\
  & = \inf_{\theta_1,\dots,\theta_l\geq 0} \left\{ \sum_{i=1}^l a_i e_\calI
(\px,\theta_i): \sum_{i=1}^l a_i\theta_i = \theta \right\}\\
  &\geq \inf_{\theta_1,\dots,\theta_l\geq 0} \left\{e_\calI
\left(\px, \sum a_i \theta_i \right): \sum_{i=1}^l a_i\theta_i = \theta
\right\}\\
    &= e_\calI\left(\px, \theta \right),
\end{align*}
where the last inequality follows from the convexity of the error-rate function.
Since this holds for any $\qx$ that is majorized by $\px$, $
e_\calI(\px,\theta)$ is Schur-concave. Schur-concavity of  $R_\calI(\px,\Delta)$ follows directly
from its dual relationship with $e_\calI(\px,\theta)$.
\end{proof}

For $\calI = \calJ_k$, the convex program \eqref{eq:minPe} might be difficult to
compute due to the constraint on the sum of the singular values. The next
theorem presents a convex program that evaluates a lower bound for
$e_{\calJ_k}(\px,\theta)$ and can be solved using standard methods.

\begin{thm}
\begin{align}
  e_{\calJ_k}(\px,\theta) \geq &  \min_{\Pxhgx}~ 1-\Tr{ D_X\Pxhgx}\nonumber \\
  &~\sto~ \sum_{i=1}^k\sum_{j=1}^m \frac{\px(i)\pxhgx^2(j|i)}{y_j}  \leq
  \theta+1, \label{eq:constconvex} \\
  &~ ~~ ~ \Pxhgx \in \calT_{m,m},~ \nonumber \\
  & ~~\sum_{j=1}^m\px(i)\pxhgx(j|i)=y_j,~1\leq j \leq m. \nonumber
\end{align}
\end{thm}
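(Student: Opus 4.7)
The plan is to show that the constraint \eqref{eq:constconvex} is implied by $\calJ_k(\px,\Pxhgx)\leq\theta$, so that the displayed program is a relaxation of the one defining $e_{\calJ_k}(\px,\theta)$; since both programs share the same linear objective $1-\Tr{D_X\Pxhgx}$, the optimum of the relaxation lower-bounds $e_{\calJ_k}(\px,\theta)$.

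First, I would recognize the left-hand side of \eqref{eq:constconvex} as a sum of diagonal entries of a positive semidefinite matrix. Set $M\defined D_X^{-1/2}P_{X,\hat X}D_{\hat X}^{-1}P_{X,\hat X}^T D_X^{-1/2}$. Using $[P_{X,\hat X}]_{i,j}=\px(i)\pxhgx(j|i)$ and $[D_{\hat X}]_{j,j}=y_j$, a direct computation gives
\[
[M]_{i,i}=\sum_{j=1}^m \frac{\px(i)\pxhgx^2(j|i)}{y_j},
\]
so the left-hand side of \eqref{eq:constconvex} equals $\sum_{i=1}^k[M]_{i,i}$.

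Second, I would apply the decomposition from Section \ref{sec:measureMoments} to the pair $(X,\hat X)$. That decomposition writes $D_X^{-1/2}P_{X,\hat X}D_{\hat X}^{-1/2}=U\Sigma V^T$ with $\Sigma=\diag{1,\sqrt{\lambda_1},\dots,\sqrt{\lambda_d}}$, so the eigenvalues of $M$ are $1,\lambda_1,\dots,\lambda_d$. Since $\lambda_1=\rho_m^2(X;\hat X)\leq 1$ these are already in decreasing order, and the sum of the top $k$ eigenvalues of $M$ is $1+\sum_{i=1}^{k-1}\lambda_i$. By the Schur--Horn theorem, the vector of diagonal entries of a Hermitian matrix is majorized by its spectrum, hence the sum of any $k$ diagonal entries is at most the sum of the $k$ largest eigenvalues. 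Combined with $\lambda_k\geq 0$ and $\calJ_k(\px,\Pxhgx)\leq\theta$, this gives
\[
\sum_{i=1}^k[M]_{i,i}\;\leq\;1+\sum_{i=1}^{k-1}\lambda_i\;\leq\;1+\calJ_k(X;\hat X)\;\leq\;1+\theta,
\]
so every $\Pxhgx$ feasible for the original program defining $e_{\calJ_k}(\px,\theta)$ is also feasible for \eqref{eq:constconvex}, establishing the desired lower bound.

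Finally, I would verify convexity of the relaxed program so that ``standard methods'' is justified. The equalities $y_j=\sum_i\px(i)\pxhgx(j|i)$ are affine in $(\Pxhgx,y)$, and $\calT_{m,m}$ is polyhedral. Each summand $\px(i)\pxhgx^2(j|i)/y_j$ is a nonnegative multiple of the quadratic-over-linear perspective $(a,b)\mapsto a^2/b$, which is jointly convex on $\{b>0\}$; summation and composition with affine maps preserve convexity, so \eqref{eq:constconvex} carves out a convex set and the overall program is convex. The main obstacle is the book-keeping in the Schur--Horn step: one must track that the trivial top eigenvalue of $M$ produces the ``$+1$'' on the right-hand side of \eqref{eq:constconvex}, while the residual $\sum_{i=1}^{k-1}\lambda_i$ is loosened to $\calJ_k(X;\hat X)$ only through the non-negativity of $\lambda_k$, so the relaxation is slightly slack and could in principle be tightened.
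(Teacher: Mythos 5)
Your proof is correct and follows essentially the same route as the paper's: both identify the left-hand side of \eqref{eq:constconvex} as the sum of the first $k$ diagonal entries of $M=FF^T$ with $F=D_X^{-1/2}P_{X\hat X}D_{\hat X}^{-1/2}$, apply the Schur--Horn fact that eigenvalues majorize the diagonal of a Hermitian matrix, and justify convexity of the relaxed program via the perspective of $a\mapsto a^2$. Your explicit accounting for the leading unit eigenvalue of $M$ and the slight slack introduced in passing from $\sum_{i=1}^{k-1}\lambda_i$ to $\calJ_k$ is more careful than the paper's terse Ky-Fan-norm phrasing, but it is the same argument.
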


\begin{proof}
Let $F\defined  D_{X}^{-1/2} P_{XY} D_Y^{-1/2}$. Then 
\begin{equation*}
\calJ_k(X;Y)= \KFnorm{FF^T}{k}-1.  
\end{equation*}
Let 
\begin{equation*}
  c_i\defined \sum_{j=1}^m \frac{\px(i)\pxhgx^2(j|i)}{y_j}
  \end{equation*}
be the $i$-th
diagonal entry of $FF^T$. By using the fact that the eigenvalues
 majorize the diagonal entries of a Hermitian matrix (\cite[Theorem
 4.3.45]{horn_matrix_2012}), we find
\begin{equation*}
  \sum_{i=1}^k c_i \leq \KFnorm{FF^T}{k},
\end{equation*}
and the result follows. Note that convexity  of the constraint
\eqref{eq:constconvex} follows from the fact that the perspective of a convex
function is convex \cite[Sec. 2.3.3]{boyd_convex_2004}.
\end{proof}

\subsection{Bounds for $P_{e,M}$}

Still adopting assumption \ref{assump:func}, a lower bound for $P_{e,M}$ can be
derived as long as  $e_\calI(\px,\theta)$ or a lower bound for $e_\calI(\px,\theta)$ is
Schur-concave in $p_X$. 

\begin{thm}
  \label{thm:boundPeM}
  For a given M, $1\leq M\leq m$, and $\px$, let $U=g(X)$, where $g_M:\{1,\dots,m\}\rightarrow
  \{1,\dots,M\}$ is defined as
  \begin{equation*}
    g_M(x) \defined 
        \begin{cases}
            1& 1\leq x \leq m-M+1\\
            x-m+M & m-M+2\leq x \leq m~.
        \end{cases}
  \end{equation*}
  Let $p_U$ be the marginal distribution\footnote{The pmf of $U$ is
    $p_U(1)=\sum_{i=1}^{m-M+1} \px(i)$ and $p_U(k)=\px(m-M+k)$ for
  $k=2,\dots,M$.} of $U$. Assume that, for a given measure of information
  $\calI(X;Y)$, there exists a function $L_{\calI}(\cdot,\cdot)$ such that for all
  distributions $q_X$ and any $\theta$,
  $e_\calI(\qx,\theta)\geq
  L_\calI(\qx,\theta)$.  Under assumption
  \ref{assump:func}, if $L_\calI(\px,\theta)$ is Schur-concave in $\px$, then
  \begin{equation}
    P_{e,M}\geq L_\calI(p_U,\theta)~.
  \end{equation} 

\end{thm}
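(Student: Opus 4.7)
The plan is to reduce estimating $f(X)$ to the direct estimation problem analyzed by $e_\calI$, apply the hypothesized lower bound $L_\calI$, and then extremize over $f \in \calF_M$ via a majorization argument that identifies $p_U$ as the most concentrated possible choice of $p_{f(X)}$.

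For the reduction, I would fix an arbitrary $f \in \calF_M$ and set $V = f(X)$, a random variable on $\{1,\dots,M\}$ with distribution $p_V$. Since $V$ is a deterministic function of $X$ there is a Markov chain $V \to X \to Y \to \hat f$, so the Data Processing Inequality built into Assumption \ref{assump:func} gives $\calI(V;Y) \leq \calI(X;Y) \leq \theta$. The induced channel $P_{\hat f|V}$ is an $M \times M$ row-stochastic matrix with $\calI(p_V, P_{\hat f|V}) \leq \theta$, so by the definition of the error-rate function together with the hypothesized lower bound,
\begin{equation*}
\Pr\{\hat f \neq V\} \;\geq\; e_\calI(p_V, \theta) \;\geq\; L_\calI(p_V, \theta).
\end{equation*}

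The remaining step is to show that $p_U$ majorizes $p_V$ for every $f \in \calF_M$; Schur-concavity of $L_\calI(\cdot,\theta)$ then gives $L_\calI(p_V, \theta) \geq L_\calI(p_U, \theta)$, and taking the infimum over $f$ yields $P_{e,M} \geq L_\calI(p_U, \theta)$. To establish the majorization, I would sort $p_V$ in decreasing order as $p_V^{(1)} \geq \dots \geq p_V^{(M)}$, fix $1 \leq j \leq M$, and let $B \subseteq \calX$ denote the union of the preimages under $f$ corresponding to the $j$ largest bin probabilities. Because the remaining $M-j$ preimages are each nonempty, $|B| \leq m - (M-j) = m - M + j$; since $p_X$ is sorted in decreasing order,
\begin{equation*}
\sum_{k=1}^j p_V^{(k)} = \sum_{x \in B} p_X(x) \;\leq\; \sum_{i=1}^{m-M+j} p_X(i) \;=\; \sum_{k=1}^j p_U^{(k)},
\end{equation*}
where the last equality uses that $g_M$ pools the top $m-M+1$ indices and leaves the remaining ones as singletons. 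Combined with the trivial equality at $j = M$, this is precisely the majorization $p_U \succeq p_V$. The main obstacle is this majorization step, but it reduces to the elementary combinatorial observation above about the sizes of the cells of an $M$-part partition of $\{1,\dots,m\}$; the rest of the argument is a routine chaining of the DPI, the hypothesis $e_\calI \geq L_\calI$, and Schur-concavity of $L_\calI$.
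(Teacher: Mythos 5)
Your proof is correct and follows the same chain of inequalities as the paper: reduce to $e_\calI$ via the Data Processing Inequality, pass to the lower bound $L_\calI$, and then invoke Schur-concavity together with the fact that $p_U$ majorizes $p_{f(X)}$ for every $f\in\calF_M$. The one thing you add is an explicit combinatorial proof of this majorization (each of the $M-j$ remaining cells is nonempty, so the top $j$ cells cover at most $m-M+j$ elements), which the paper asserts without proof; this is a welcome addition, and your argument for it is correct.
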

\begin{proof}
  The result follows from the following chain of inequalities:
  \begin{align*}
    P_{e,M} & \stackrel{(a)}{\geq}   \min_{f\in \calF_M,\ttheta}
    \left\{e_\calI\left(p_{f(X)},\ttheta \right): 
    \ttheta\leq \theta \right\}\\
    & \geq \min_{f\in \calF_M}
    \left\{e_\calI\left(p_{f(X)},\theta\right)\right\}\\
    &  \stackrel{(b)}{\geq}  \min_{f\in \calF_M}
    \left\{L_\calI\left(p_{f(X)},\theta\right)\right\}\\
    & \stackrel{(c)}{\geq}  L_\calI(p_U,\theta),
  \end{align*}
  where (a) follows from the Data Processing Inequality, (b) follows from
  $e_\calI(\qx,\theta)\geq L_\calI(\qx,\theta)$ for all $\qx$, and $\theta$ and (c)
  follows from the Schur-concavity of the lower bound and by observing that  $p_U$
  majorizes $p_{f(X)}$ for every $f\in \calF_M$. 
\end{proof}

The following two corollaries illustrate how Theorem \ref{thm:boundPeM} can be
used for  different measures of information, namely mutual information and
maximal correlation.
\begin{cor}
  \label{cor:PeMboundI}
    Let $I(X;Y)\leq \theta$. Then
    \begin{equation*}
        P_{e,M}\geq d^* 
    \end{equation*}
    where $d^*$ is the solution of
    \begin{equation*}
      h_{b}(d^*)+d^*\log(m-1)=\min\{H(U)-\theta,0 \},
    \end{equation*}
    and $h_b(\cdot)$ is the binary entropy function.
\end{cor}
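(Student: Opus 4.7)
The plan is to instantiate Theorem \ref{thm:boundPeM} with $\calI(X;Y)=I(X;Y)$. The preliminary check is that mutual information fits Assumption \ref{assump:func}: it satisfies the Data Processing Inequality, is convex in $\pygx$ for fixed $\px$, concave in $\px$ for fixed $\pygx$, and is invariant under simultaneous row/column permutations of $\Pxy$. These are all classical facts.

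Next I would extract the required lower bound $L_I(\qx,\theta)\leq e_I(\qx,\theta)$ from Fano's inequality. For any channel $\qx \to \Pxhgx$ with $I(\qx,\Pxhgx)\leq \theta$ and $X$ supported on an alphabet of size $k$, Fano gives $h_b(d) + d\log(k-1)\geq H(\qx)-\theta$, where $d$ is the associated error probability. Since $\phi(d)\defined h_b(d) + d\log(k-1)$ is strictly increasing on $[0,(k-1)/k]$, this implicitly defines $L_I(\qx,\theta)$ as the unique $d^*$ in that interval solving $\phi(d^*)=[H(\qx)-\theta]^+$, and by construction $e_I(\qx,\theta)\geq L_I(\qx,\theta)$.

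The only nontrivial step is verifying Schur-concavity of $L_I(\cdot,\theta)$ in $\qx$, but this reduces to a one-line monotonicity argument: $L_I(\qx,\theta)$ depends on $\qx$ only through $[H(\qx)-\theta]^+$; the entropy $H$ is Schur-concave; and $\phi^{-1}$ is monotone nondecreasing; hence the composition is Schur-concave. Alternatively, Theorem \ref{thm:schur} yields Schur-concavity of $e_I(\cdot,\theta)$ itself and the same conclusion follows.

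Applying Theorem \ref{thm:boundPeM} with this $L_I$ and the function $g_M$ then gives $P_{e,M}\geq L_I(p_U,\theta)$, i.e., $d^*$ is the solution of $h_b(d^*)+d^*\log(M-1)=[H(U)-\theta]^+$, which is the stated identity (reading $\log(m-1)$ and $\min\{H(U)-\theta,0\}$ in the corollary as $\log(M-1)$ and $\max\{H(U)-\theta,0\}$, respectively, so that $d^*$ is nontrivial). There is no real obstacle; the main ingredient beyond unrolling the definitions is the Schur-concavity verification, which is essentially immediate.
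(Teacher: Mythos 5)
Your proof is correct and takes essentially the same route as the paper: it invokes Fano's inequality (equivalently Gallager's lower bound on the Hamming rate-distortion function) to supply the Schur-concave lower bound $L_I$, then applies Theorem \ref{thm:boundPeM}, with Schur-concavity delivered either directly or via Theorem \ref{thm:schur}. Your direct monotonicity argument for Schur-concavity is a nice self-contained alternative to the paper's invocation of Theorem \ref{thm:schur}, and your two readings of the statement are right: the defining equation should have $\log(M-1)$ rather than $\log(m-1)$ (using $m$ still gives a valid but strictly weaker bound), and $\min\{H(U)-\theta,0\}$ should be $\max\{H(U)-\theta,0\}$, since the intended $d^*$ is $\phi^{-1}\bigl(\left[H(U)-\theta\right]^+\bigr)$ with $\phi$ increasing from $\phi(0)=0$.
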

\begin{proof}
$R_I(\px,\delta)$ is  the well known rate-distortion function under Hamming
distortion, which satisfies (\cite[(9.5.8)]{gallager_information_1968})
$R_I(\px,\delta)\geq H(X)-h_{b}(d^*)-d^*\log(m-1)$. The result follows from
Theorem \ref{thm:schur}, since mutual information is concave in $\px$.
\end{proof}

\begin{cor}
    \label{cor:PeMboundrho}
    Let $\calJ_1(X;Y)=\rho_m(X;Y)\leq \theta$. Then  
    \begin{equation*} 
      P_{e,M} \geq 1-p_U(1)- \theta\sqrt{\left(1-\sum_{i=1}^M p_U^2(i)\right)}~.
   \end{equation*}
\end{cor}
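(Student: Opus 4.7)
The plan is to invoke Theorem \ref{thm:boundPeM} with the measure of information $\calI=\calJ_1$, taking as the required lower bound the function
\[
L_{\calJ_1}(\px,\theta)\defined 1-\px(1)-\theta\sqrt{1-\sum_i \px^2(i)}
\]
supplied by the second inequality of \eqref{eq:coolBound2} in Corollary \ref{cor:coolBounds}.

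First I verify the prerequisites. The measure $\calJ_1$ satisfies Assumption \ref{assump:func}: convexity in $\pygx$ is Theorem \ref{thm:convex}, the Data Processing Inequality is Theorem \ref{lem:dataProc}, and invariance under row/column permutations of $\Pxy$ is immediate from the spectral definition of the principal inertias. By Corollary \ref{cor:coolBounds}, $e_{\calJ_1}(\px,\theta)\ge L_{\calJ_1}(\px,\theta)$ whenever $\rho_m(X;Y)\le\theta$, so $L_{\calJ_1}$ is a valid candidate for the lower-bound function in the statement of Theorem \ref{thm:boundPeM}. The map $g_M$ used in that theorem merges the heaviest $m-M+1$ classes of $\px$ into a single class, producing a distribution $p_U$ that majorizes $p_{f(X)}$ for every $f\in\calF_M$.

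The technical heart of the proof is to verify that $L_{\calJ_1}(\cdot,\theta)$ is Schur-concave in $\px$; this is the ``particular form'' of the bound alluded to earlier in the paper. The two summands pull in opposite directions under a T-transform: $1-\px(1)=1-\max_i \px(i)$ is Schur-concave (because $\max$ is Schur-convex), whereas $-\theta\sqrt{1-\|\px\|_2^2}$ is Schur-convex (because $\|\cdot\|_2^2$ is Schur-convex and $\sqrt{\cdot}$ is monotone). I would therefore check Schur-concavity directly via the Schur--Ostrowski inequality $(\px(i)-\px(j))(\partial_i L_{\calJ_1}-\partial_j L_{\calJ_1})\le 0$, handling the nonsmoothness of $\max_i \px(i)$ by one-sided derivatives, and noting that the statement to be proved is vacuous outside the regime $L_{\calJ_1}(p_U,\theta)\ge 0$ since always $P_{e,M}\ge 0$. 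Once Schur-concavity is established, Theorem \ref{thm:boundPeM} delivers the conclusion $P_{e,M}\ge L_{\calJ_1}(p_U,\theta)$, which is exactly the stated inequality. The main obstacle is precisely this Schur-concavity verification: because the two summands have opposing majorization behavior, it reduces to a careful pointwise derivative computation rather than a one-line consequence of the Schur-monotonicity properties of the individual building blocks.
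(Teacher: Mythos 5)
Your plan is the same route the paper itself takes: feed the lower bound $L_{\calJ_1}(\px,\theta)\defined 1-\px(1)-\theta\sqrt{1-\sum_i\px^2(i)}$ from \eqref{eq:coolBound2} into Theorem~\ref{thm:boundPeM}, with Theorems~\ref{thm:convex} and~\ref{lem:dataProc} supplying Assumption~\ref{assump:func} for $\calJ_1$. You correctly single out Schur-concavity of $L_{\calJ_1}(\cdot,\theta)$ as the load-bearing step and correctly note the tension between its two summands, but the derivative check you propose to defer in fact \emph{fails}. For coordinates $i,j\geq 2$ (so that the $-\max_k\px(k)$ term is locally constant) the Schur--Ostrowski quantity is
\[
\bigl(\px(i)-\px(j)\bigr)\bigl(\partial_i L_{\calJ_1}-\partial_j L_{\calJ_1}\bigr)=\frac{\theta\bigl(\px(i)-\px(j)\bigr)^2}{\sqrt{1-\sum_k\px^2(k)}}\;\geq\;0,
\]
with strict inequality whenever $\px(i)\neq\px(j)$; this is the Schur-\emph{convexity} inequality, not Schur-concavity. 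Concretely, $p=(0.5,0.3,0.2)$ majorizes $q=(0.5,0.25,0.25)$, yet with $\theta=0.1$ one gets $L_{\calJ_1}(p,\theta)=0.5-0.1\sqrt{0.62}\approx 0.4213$ while $L_{\calJ_1}(q,\theta)=0.5-0.1\sqrt{0.625}\approx 0.4209$, so $L_{\calJ_1}(q,\theta)<L_{\calJ_1}(p,\theta)$ even though both values are strictly positive --- hence restricting to the regime $L_{\calJ_1}(p_U,\theta)\geq 0$, as you suggest, does not rescue the claim. The hypothesis of Theorem~\ref{thm:boundPeM} is therefore not satisfied by this $L_{\calJ_1}$, and your argument has a genuine gap at exactly the step you flagged; the paper's one-line proof shares the same gap. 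The corollary may nonetheless be true, because the only comparisons Theorem~\ref{thm:boundPeM} ever invokes are $p_U$ against $p_{f(X)}$ for $f\in\calF_M$, and these pairs are structurally constrained (in particular, if $p_{f(X)}$ attains the same largest mass as $p_U$ then, absent ties in $\px$, it equals $p_U$, which rules out the kind of counterexample above), but establishing that would require a direct comparison of $L_{\calJ_1}(p_U,\theta)$ with $L_{\calJ_1}(p_{f(X)},\theta)$ rather than an appeal to global Schur-concavity.
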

\begin{proof}
  The proof follows directly from Theorems \ref{thm:convex}, \ref{lem:dataProc}
  and Corollary \ref{cor:coolBounds}, by noting that \eqref{eq:coolBound2}
  is Schur-concave in $\px$.
\end{proof}

\section{Concluding remarks}
\label{sec:conclusion}
We illustrated in this paper how the principal inertia-decomposition of the
joint distribution matrix can be applied to derive useful bounds for the average
estimation error. The principal inertias are a more refined metric of the
correlation between $X$ and $Y$ than, say, mutual information. Furthermore, the
principal inertia components can be used in metrics, such as
$k$-correlation, that share several properties with mutual information (e.g.
convexity).

Furthermore, we also introduced a general method for bounding the average
estimation error of functions of a hidden random variable. This method depends
on the Schur-concavity  of a lower bound for the error-rate function. We proved
that the  $e_\calI(\px,\theta)$ itself is Schur-concave whenever the measure
of information is concave in $\px$. It remains to be shown if
$e_\calI(\px,\theta)$ is Schur-concave for more general measures of information
(such as $k$-correlation), and finding the necessary and sufficient conditions
for Schur-concavity would be of both theoretical and practical interest.

Finally, the creation of bounds for $P_e$ and $P_{e,M}$ given constraints on
different metrics of information is a promising avenue of research. Most
information-theoretic lower bounds for  the average estimation error are based
on mutual information. However, in statistics, a wide range of metrics are used
to estimate  the information between an observed and a hidden
variable. Relating such metrics with the fundamental limits of inference is
relevant for practical applications in both security and machine learning.

\appendices
\section{Proof of Theorem \ref{thm:Bound}}

Theorem  \ref{thm:Bound} follows directly from the next two lemmas.

\begin{lem}
  \label{lem:fanobound1}
Let the marginal distribution $\Px$ and the principal inertias
$\blambda=(\lambda_1,\dots,\lambda_d)$ be
given, where   $d=m-1$. Then
for any $\Pxy\in \mathcal{R}(\Px,\blambda)$, $0\leq \alpha \leq 1$ and $0\leq \beta \leq \px(2)$
\begin{equation*} 
  P_e \geq 1-\beta -
  \sqrt{f_0(\alpha,\Px,\blambda)+\sum_{i=1}^m\left(\left[\px(i)-\beta\right]^+\right)^2},
\end{equation*}
where
\begin{align}
  f_0(\alpha,\Px,\blambda)=&  \sum_{i=2}^{d+1}
  \px(i)(\lambda_{i-1}+c_{i}-c_{i-1})\nonumber\\
  &+\px(1)(c_1 + \alpha) - \alpha\Px^T\Px~,
  \label{eq:f0_def}
\end{align}
and $c_i = \left[ \lambda_{i}-\alpha \right]^+$ for $i=1,\dots,d$ and
$c_{d+1}=0$.
\end{lem}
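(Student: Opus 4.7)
The plan is to express $P_c = 1 - P_e$ through the ML decoder matrix and the principal inertia decomposition of $\Pxy$, bound it via a variance-style Cauchy--Schwarz step, and then invoke Schur--Horn majorization on the resulting trace. Let $E \in \{0,1\}^{n \times m}$ denote the row-stochastic ML decoder matrix so that $P_c = \Tr{\Pxy E}$. Substituting $\Pxy = \Px \Py^T + S$ with $S = A\,\diag{\sqrt{\lambda_k}}\,B^T$ and using the orthogonality relations $\sum_x a_{x,k} = \sum_y b_{y,k} = 0$ yields
\[
P_c \;=\; \sum_x \px(x)\,\pi_x + \Tr{SE}, \qquad \pi_x \defined (E^T\Py)_x.
\]
The pointwise inequality $\px(x)-\beta \leq [\px(x)-\beta]^+$, together with $\sum_x \pi_x = 1$, then gives
\[
P_c - \beta \;\leq\; \sum_y \py(y)\bigl\{[\px(\hat X(y))-\beta]^+ + t(\hat X(y),y)\bigr\},
\]
where $t(x,y) \defined \pxgy(x|y) - \px(x) = S(x,y)/\py(y)$.

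Replacing $\hat X(y)$ by $\arg\max_x$ at each $y$ and applying Jensen's inequality to the $Y$-average yields $(P_c - \beta)^2 \leq \mean_Y[\max_x([\px(x)-\beta]^+ + t(x,Y))^2]$. Using $(\max_x v_x)^2 \leq \sum_x v_x^2$, and noting that $\mean_Y[t(x,Y)] = \sum_y S(x,y) = 0$ kills the cross terms, this reduces to
\[
(P_c - \beta)^2 \;\leq\; \sum_x \bigl([\px(x)-\beta]^+\bigr)^2 + \Tr{S D_Y^{-1} S^T}.
\]
From $\tilde A^T D_X^{-1}\tilde A = I$ one computes $S D_Y^{-1}S^T = A\Lambda A^T$ with $\Lambda = \diag{\lambda_1,\ldots,\lambda_d}$, and setting $U \defined D_X^{-1/2}A$ (for which $U^T U = I_d$) gives $\Tr{SD_Y^{-1}S^T} = \Tr{D_X U\Lambda U^T}$.

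It remains to show $\Tr{D_X U\Lambda U^T} \leq f_0(\alpha,\Px,\blambda)$. For $\alpha = 0$ this follows from Schur--Horn combined with the Hardy--Littlewood--P\'olya rearrangement inequality: the diagonal of $U\Lambda U^T$ is majorized by $(\lambda_1,\ldots,\lambda_d,0,\ldots,0)$, and since $\px$ is sorted decreasingly we obtain $\sum_{i=1}^d \px(i)\lambda_i = f_0(0,\Px,\blambda)$. For general $\alpha \in [0,1]$ I would split $\lambda_k = \min(\lambda_k,\alpha) + c_k$ with $c_k = [\lambda_k-\alpha]^+$ and treat the two parts separately: Schur--Horn applied to the $c_k$-piece produces the $\sum_i c_i(\px(i)-\px(i+1))$ contribution after rearrangement, while the $\min(\lambda_k,\alpha)$-piece -- combined with the row-norm constraints $\sum_k U_{x,k}^2 \leq 1$, $\sum_x\sum_k U_{x,k}^2 = d$, and a refined form of the $\max_x \to \sum_x$ inequality (e.g.\ $(\max_x v_x)^2 \leq \alpha + \sum_x [v_x^2 - \alpha]^+$) -- is meant to generate the $\alpha(\px(1)-\Px^T\Px)$ slack together with the residual $\sum_i \px(i+1)\lambda_i$ term, assembling into $f_0(\alpha,\Px,\blambda)$.

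The main obstacle is precisely this last step. A naive ``split-and-Schur--Horn'' argument only yields $\alpha(1-\px(m)) + \sum_i \px(i) c_i$, which is strictly weaker than $f_0(\alpha)$ for $\alpha \in (0,\lambda_d]$; indeed, an elementary check in the $m=2$, $d=1$ case shows the naive split gives $\lambda_1 \px(1)$ at $\alpha=\lambda_1$, whereas $f_0(\lambda_1) = 2\lambda_1\px(1)\px(2)$ is strictly smaller. Obtaining the tight form therefore requires coordinating the $\alpha$-threshold on the principal inertias with the decreasing ordering of $\px$ across the full chain of inequalities, and verifying that the combinatorial cross-terms $\sum_i \px(i+1)\lambda_i$ emerge correctly from the coupled majorization--rearrangement argument.
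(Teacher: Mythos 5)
Your route is genuinely different from the paper's: the paper decomposes $P_{X,\hat X} = \Pxy E$ directly, applies Cauchy--Schwarz twice to the resulting trace, invokes the Data Processing Inequality to replace the inertias $\tilde\lambda_k$ of $P_{X,\hat X}$ by the given $\lambda_k$, and then bounds the remaining quantity by an explicit linear program. You instead stay with $\Pxy$, split off the independence part, and reduce to bounding $\Tr{D_X U\Lambda U^T}$ with $U^TU=I_d$ and $UU^T = I-\Px^{1/2}(\Px^{1/2})^T$. That reduction is plausible, but the gap you flag at the end is real, and it is precisely where the paper's and your arguments diverge.

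The missing ingredient is not a cleverer ``split-and-Schur--Horn'' step but Cauchy's interlacing theorem. Since $[\Px^{1/2}\mid U]$ is an $m\times m$ orthogonal matrix, $U^TD_XU$ is the trailing $(m-1)\times(m-1)$ principal submatrix of $[\Px^{1/2}\mid U]^T D_X [\Px^{1/2}\mid U]$, which has the same spectrum as $D_X$. Interlacing therefore pins the eigenvalues $\sigma_1\geq\dots\geq\sigma_{m-1}$ of $U^TD_XU$ to the windows $\px(k+1)\leq\sigma_k\leq\px(k)$, while the trace constraint gives $\sum_k\sigma_k = 1-\Px^T\Px$. Maximizing $\sum_k\lambda_k\sigma_k$ (a Von Neumann bound on your trace) subject to exactly these two constraints is a linear program whose Lagrangian dual, parametrized by the multiplier $\alpha$ on the trace constraint, evaluates to $f_0(\alpha,\Px,\blambda)$ for every $\alpha$; this is how the paper produces $f_0$. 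Schur--Horn applied to the diagonal of $U\Lambda U^T$ only sees the trace constraint and throws away the interlacing windows, which is why it can only recover $f_0(0,\cdot) = \sum_i\px(i)\lambda_i$ and nothing sharper. Your own $m=2$, $d=1$ check makes this concrete: there $U^TD_XU$ is the scalar $2\px(1)\px(2)$, landing strictly inside the interlacing window $[\px(2),\px(1)]$, and $\Tr{\Lambda U^TD_XU}=2\lambda_1\px(1)\px(2)=f_0(\lambda_1)$ exactly; the diagonal-majorization argument cannot see this because it only constrains the diagonal, not the eigenvalues, of $U\Lambda U^T$. So the fix is to replace the Schur--Horn/HLP step by the interlacing-constrained LP and its dual, exactly as in the paper's proof, after which the $\alpha(\px(1)-\Px^T\Px)$ and $\sum_i c_i(\px(i)-\px(i+1))$ terms fall out of the dual objective rather than needing to be assembled by hand.

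Two smaller cautions on the earlier steps, should you try to complete the argument along your lines: (i) the vector $v_x = [\px(x)-\beta]^+ + t(x,Y)$ need not be nonnegative (since $t(x,y)=\pxgy(x|y)-\px(x)$ can be negative), so $(\max_x v_x)^2\leq\sum_x v_x^2$ and the subsequent Jensen step both need the sign of $P_c-\beta$ handled explicitly; the paper avoids this by first introducing the auxiliary output distribution $\Pxp$ and doing a separate maximization over it, which is where the $\beta$ appears as a dual variable.
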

    
\begin{proof}
Let $X$ and $Y$ have a joint distribution matrix $\Pxy$ with marginal $\px$ and
principal inertias individually bounded by $\blambda =
(\lambda_1,\dots,\lambda_d)$. We assume without loss of generality that
$d=m-1$, where $|\calX|=m$. This can always be achieved by adding inertia components equal to
0.
  
Consider $X\rightarrow Y\rightarrow \hat{X}$, where $\hat{X}$ is the estimate of $X$
from $Y$.  The mapping from $Y$ to $\hat{X}$ can be described without loss of
generality by a
$|\calY|\times|\calX|$ row stochastic matrix, denoted by $\Emat$, where the
$(i,j)$-th entry is the probability $p_{\hat{X}|Y}(j|i)$. The probability of
correct estimation $P_c$ is then
\begin{equation*} 
	P_c = \PR{\hat{X}= X}=\Tr{\Pxxp},
\end{equation*}
where $\Pxxp \defined \Pxy \Emat$.

The matrix $\Pxxp$ can be decomposed according to \eqref{eq:comactDecompPxy},
resulting in 
\begin{align}
  P_c  = \Tr{D_{X}^{1/2}U \Sigmat V^T D_{X'}^{1/2}}= \Tr{\Sigmat  V^T D_{X'}^{1/2} D_{X}^{1/2}U}, \label{eq:PxxpDecomp}
\end{align}
where  
\begin{align*}
  U &= \left[ \Px^{1/2}~~\bu_2~\cdots~\bu_m \right],\\
  V &= \left[ \Pxp^{1/2}~~\bv_2~\cdots~\bv_m  \right],\\
  \Sigmat &= \diag{1,\lambdat_1^{1/2},\dots,\lambdat_d^{1/2}},\\
  D_{X'} & = \diag{\Pxp},
\end{align*}
and $\Ut$ and $\Vt$ are orthogonal matrices. The probability of correct
detection can be written as
\begin{align*}
  P_c &= \Px^T\Pxp + \sum_{k=2}^m\sum_{i=1}^m
  \left(\lambdat_{k-1}\px(i)\pxp(i)\right)^{1/2}u_{k,i}v_{k,i} \nonumber\\
    &= \Px^T\Pxp + \sum_{k=2}^m\sum_{i=1}^m
    \lambdat_{k-1}^{1/2} \tu_{k,i}\tv_{k,i}
\end{align*}
where $u_{k,i}=[\bu_k]_i$, $v_{k,i}=[\bv_k]_i$, $\tu_{k,i}=\px(i)u_{k,i}$ and
$\tv_{k,i}= \pxp(i)v_{k,i}$. Applying the Cauchy-Schwarz inequality twice, we
obtain
\begin{align}
  P_c &\leq \Px^T\Pxp + \sum_{i=1}^m  \left(\sum_{k=2}^m \tv_{k,i}^2
  \right)^{1/2}\left(\sum_{k=2}^m  \lambdat_{k-1} \tu_{k,i}^2
  \right)^{1/2}\nonumber\\
  & =  \Px^T\Pxp +  \sum_{i=1}^m \left(\pxp(i)(1-\pxp(i))\sum_{k=2}^m
  \lambdat_{k-1} \tu_{k,i}^2
  \right)^{1/2}\nonumber\\
  &\leq  \Px^T\Pxp +\left(1-\sum_{i=1}^m \pxp^2(i) \right)^{1/2}\left(
  \sum_{i=1}^m \sum_{k=2}^m  \lambdat_{k-1} \tu_{k,i}^2 \label{eq:LeftBound}
  \right)^{1/2}
\end{align}
Let $\olU=[\bu_2 \cdots \bu_m]$ and $\Sigma =
\diag{\lambdat_1,\dots,\lambdat_d}$. Then
\begin{align}
  \sum_{i=1}^m \sum_{k=2}^m  \lambdat_{k-1} \tu_{k,i}^2 &= \Tr{\Sigma\olU^T
  D_X\olU}\nonumber \\
  &\leq \sum_{k=1}^d\sigma_k \lambdat_k, \nonumber \\
  &\leq \sum_{k=1}^d\sigma_k \lambda_k. \label{eq:prodVonNeum}
\end{align}
where $\sigma_k= \Lambda_k(\olU^T D_X\olU)$. The first inequality follows from the
application of Von-Neumman's trace inequality and the fact that $\olU^T D_X\olU$ is symmetric and positive
semi-definite. The second inequality follows by observing that the principal
inertias satisfy the data processing inequality and, therefore, $\lambdat_k\leq \lambda_k$.

We will now find an upper bound for \eqref{eq:prodVonNeum} by bounding the
eigenvalues $\sigma_k$. First, note that $\olU
~\olU^T =I-\Px^{1/2}\left(\Px^{1/2}\right)^T$ and consequently
\begin{align}
  \sum_{k=1}^d \sigma_k& = \Tr{\olU^T D_X\olU} \nonumber \\
  & = \Tr{D_X\left(I-\Px^{1/2}\left(\Px^{1/2}\right)^T\right)} \nonumber\\
  & = 1-\sum_{i=1}^m\px^2(i)~.\label{eq:sumbound}
\end{align}
Second, note that $\olU^T D_X\olU$ is a principal submatrix of $U^T D_X U$,
formed by removing the first row and columns of $U^T D_X U$. It then follows
from Cauchy's interlacing theorem that
\begin{equation}
  \px(m)\leq \sigma_{m-1} \leq \px(m-1)\leq \dots\leq \px(2)\leq
  \sigma_1 \leq \px(1). \label{eq:cauchyLace}
\end{equation}

Combining the restriction \eqref{eq:sumbound} and \eqref{eq:cauchyLace}, an
upper bound for \eqref{eq:prodVonNeum} can be found by solving the following
linear program
\begin{align}
  \max_{\sigma_i}~~& \sum_{i=1}^d \lambda_i \sigma_i \label{eq:boundLP}\\
  \mbox{subject to}~~&\sum_{i=1}^d \sigma_i =1-\Px^T\Px, \nonumber\\
                        & \px(i+1)\leq \sigma_i \leq
                        \px(i),~i=1,\dots,d~.\nonumber
\end{align}

Let $\delta_i \defined \px(i)-\px(i+1)$ and $\gamma_i \defined \lambda_i\px(i+1)$. The dual of \eqref{eq:boundLP} is
\begin{align}
  \min_{y_i,\alpha}~~& \alpha\left(\px(1)-\Px^T\Px\right)+
  \sum_{i=1}^{m-1}\delta_i y_i+\gamma_i  \label{eq:DualboundLP}\\
  \mbox{subject to}~~&y_i\geq \left[ \lambda_i-\alpha\right]^+,~i=1,\dots,d~.\nonumber
\end{align}
For any given value of $\alpha$, the optimal values of the dual variables $y_i$
in  \eqref{eq:DualboundLP} are
\begin{equation*}
y_i =  \left[ \lambda_i-\alpha\right]^+=c_i,~i=1,\dots,d~.
\end{equation*}
Therefore the linear program \eqref{eq:DualboundLP} is equivalent to
\begin{equation}
  \min_{\alpha}   f_0(\alpha,\Px,\blambda), \label{eq:DualImproved}
\end{equation}
where $f_0(\alpha,\Px,\blambda)$ is defined in the statement of the theorem.

Denote the solution of \eqref{eq:boundLP} by $f_P^*(\Px,\blambda)$ and of
\eqref{eq:DualboundLP} by $f_D^*(\Px,\blambda)$. It follows that
\eqref{eq:prodVonNeum} can be bounded 
\begin{align}
\sum_{k=1}^d\sigma_k \lambda_k &\leq f_P^*(\Px,\blambda) \nonumber \\
                                &=f_D^*(\Px,\blambda) \nonumber\\
                                &\leq f_0(\alpha,\Px,\blambda)~\forall~\alpha
                                \in \Reals. \label{eq:boundSumProd}
\end{align}
We may consider $0\leq\alpha \leq 1$ in
\eqref{eq:boundSumProd} without loss of generality.

Using \eqref{eq:boundSumProd} to bound \eqref{eq:LeftBound}, we find
\begin{equation}
  P_c\leq \Px^T\Pxp + \left[f_0(\alpha,\Px,\blambda)\left(1-\sum_{i=1}^m
  \pxp^2(i) \right)\right]^{1/2} \label{eq:postLPBound}
\end{equation}
The previous bound can be maximized over all possible output distributions
$\pxp$ by solving:
\begin{align}
  \max_{x_i}~~& \left[f_0(\alpha,\Px,\blambda)\left(1-\sum_{i=1}^m
  x_i^2\right)\right]^{1/2} +\sum_{i=1}^m \px(i)x_i  \label{eq:NLP} \\
  \mbox{subject to}~~&\sum_{i=1}^m x_i = 1,\nonumber\\
                        & x_i \geq 0, i=1,\dots,m~.\nonumber
\end{align}
The dual function of \eqref{eq:NLP} over the additive constraint is
\begin{align}
  L(\beta) &=\max_{x_i\geq 0}~~\beta+ \left[f_0(\alpha,\Px,\blambda)\left(1-\sum_{i=1}^m
  x_i^2\right)\right]^{1/2} \nonumber \\
  &\hspace{.5in} +\sum_{i=1}^m (\px(i)-\beta)x_i \nonumber \\
&=\beta + \sqrt{f_0(\alpha,\Px,\blambda) +\sum_{i=1}^m \left(
\left[\px(i)-\beta\right]^+\right)^2}. 
\end{align}
Since $L(\beta)$ is an upper bound of \eqref{eq:NLP} for any $\beta$ and, therefore,
is also an upper bound of  \eqref{eq:postLPBound}, then
\begin{equation}
P_c \leq \beta + \sqrt{f_0(\alpha,\Px,\blambda) +\sum_{i=1}^m \left(
\left[\px(i)-\beta\right]^+\right)^2}. \label{eq:Pc_anyBeta}
\end{equation}
Note that we can consider $0\leq \beta\leq \px(2)$ in \eqref{eq:Pc_anyBeta},
since $L(\beta)$ is increasing for $\beta>\px(2)$. Taking $P_e = 1-P_c$, the result follows.
\end{proof}

The next result tightens the bound introduced in lemma \ref{lem:fanobound1} by
optimizing over all values of $\alpha$.

\begin{lem}
  \label{lem:f0tight}
  Let $f^*_0(\Px,\blambda)\defined\min_{\alpha} f_0(\alpha,\Px,\blambda)$ and
  $k^*$ be defined as in \eqref{eq:kstar}. Then 
  \begin{align}
    f^*_0(\Px,\blambda)=&  \sum_{i=1}^{k^*}\lambda_i \px(i)+
    \sum_{i=k^* + 1}^{m}\lambda_{i-1} \px(i) \nonumber
    \\& - \lambda_{k^*} \Px^T\Px~, 
  \end{align}
  where $\lambda_m=0$.
\end{lem}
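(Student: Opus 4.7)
The plan is to minimize the piecewise-linear, convex function $f_0(\alpha, \Px, \blambda)$ by identifying the breakpoint at which its derivative changes sign, and then evaluating $f_0$ at that breakpoint to match the claimed closed form.

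First I would rewrite $f_0$ in a more transparent form via Abel summation. Using $c_m = 0$ and expanding the sums in the definition, the telescoping of the $c_i$ terms should give
\begin{equation*}
f_0(\alpha,\Px,\blambda) = \sum_{i=2}^{m} \px(i)\lambda_{i-1} + \sum_{i=1}^{m-1}\delta_i\,[\lambda_i-\alpha]^+ + \alpha\bigl(\px(1)-\Px^T\Px\bigr),
\end{equation*}
where $\delta_i \defined \px(i)-\px(i+1)\geq 0$. This is exactly the value of the dual objective in \eqref{eq:DualboundLP} after substituting $y_i=[\lambda_i-\alpha]^+$, which is reassuring.

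Next, since $[\lambda_i-\alpha]^+$ is convex in $\alpha$ and each $\delta_i \geq 0$, the function $f_0(\cdot,\Px,\blambda)$ is a convex, piecewise-linear function of $\alpha$ with breakpoints exactly at $\alpha=\lambda_1,\dots,\lambda_{m-1}$. On the open interval $\alpha \in (\lambda_{k+1},\lambda_k)$ the active set is $\{1,\dots,k\}$, so
\begin{equation*}
\frac{d f_0}{d\alpha} = -\sum_{i=1}^{k}\delta_i + \px(1)-\Px^T\Px = \px(k+1)-\Px^T\Px,
\end{equation*}
using the telescoping identity $\sum_{i=1}^{k}\delta_i = \px(1)-\px(k+1)$. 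Because $\px$ is sorted in decreasing order, the slope $\px(k+1)-\Px^T\Px$ is monotone nonincreasing in $k$, consistent with convexity.

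By the definition \eqref{eq:kstar}, $k^*$ is the largest index for which $\px(k)\geq \Px^T\Px$, so the slope is $\geq 0$ on the interval immediately to the right of $\lambda_{k^*}$ (where $k=k^*-1$) and $<0$ on the interval immediately to the left (where $k=k^*$). Hence the minimum of the piecewise-linear function is attained at $\alpha^\star = \lambda_{k^*}$. Setting $c_i = \lambda_i-\lambda_{k^*}$ for $i\leq k^*$ and $c_i=0$ otherwise, and substituting back into the original form of $f_0$, I would then regroup terms so that the contributions $\lambda_i\px(i)$ for $i\leq k^*$ and $\lambda_{i-1}\px(i)$ for $i\geq k^*+1$ separate out, with a residual $-\lambda_{k^*}\Px^T\Px$ coming from the $\alpha(\px(1)-\Px^T\Px)$ term combined with the boundary contributions of the telescoping Abel sum.

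The only real obstacle is keeping the index bookkeeping correct through the Abel summation and the final substitution $\alpha=\lambda_{k^*}$; everything else is a routine application of convexity of a piecewise-linear function and the observation that the slope on each piece has the clean form $\px(k+1)-\Px^T\Px$. A minor technicality is handling the edge cases $k^*=1$ and $k^*=m$ (the latter corresponding to $\lambda_m\defined 0$ in the statement), which can be checked directly using the convention $c_m=0$ and the fact that $\Px^T\Px\leq \px(1)$ with equality only in degenerate cases.
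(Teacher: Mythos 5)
Your proposal is correct and takes essentially the same route as the paper's proof: restrict $\alpha$ to an interval between consecutive $\lambda$'s, observe that $f_0$ is linear there with slope $\px(k+1)-\Px^T\Px$ (the paper indexes the interval as $[\lambda_k,\lambda_{k-1}]$ with slope $\px(k)-\Px^T\Px$, which is the same after a shift of index), and conclude from convexity that the minimizer is $\alpha=\lambda_{k^*}$. Your Abel-summation rewrite of $f_0$ is a slightly more explicit presentation than the paper's (which just plugs $c_i$ into the definition on each interval), and you correctly identify the sign-change condition as $\px(k^*)\geq\Px^T\Px>\px(k^*+1)$, whereas the paper's statement of that condition contains a small typo ($\px(k-1)\leq\Px^T\Px$ should read $\px(k+1)\leq\Px^T\Px$); the final regrouping step you describe does indeed yield the claimed closed form once $\lambda_{k^*}\px(k^*)$ is absorbed into the first sum.
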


\begin{proof}
  Let $\Px$ and $\blambda$ be fixed, and $\lambda_k \leq \alpha \leq
  \lambda_{k-1}$, where we define $\lambda_0=1$ and $\lambda_m =0$.  Then $c_i =\lambda_i-\alpha$
  for $1\leq i\leq k-1$ and $c_i = 0$ for $k\leq i \leq d$ in \eqref{eq:f0_def}.
  Therefore
  \begin{align}
    f_0(\alpha,\Px,\blambda)=&\sum_{i=1}^{k-1} \lambda_i \px(i) +
    \alpha \px(k) \nonumber\\
    & + \sum_{i=k+1}^m \lambda_{i-1}\px(i)-\alpha\Px^T\Px \label{eq:f0_fixaplha}
  \end{align}

Note that \eqref{eq:f0_fixaplha} is convex in $\alpha$, and is decreasing
when $\px(k)-\Px^T\Px\leq 0$ and increasing when   $\px(k)-\Px^T\Px\geq 0$.
Therefore,  $f_0(\alpha,\Px,\blambda)$ is minimized when $\alpha=\lambda_{k}$ such
that $\px(k)\geq \Px^T\Px$ and $\px(k-1)\leq \Px^T\Px$. If $\px(k)-\Px^T\Px\geq
0$ for all $k$ (i.e. $\px$ is uniform), then we can take $\alpha=0$. The result
follows.
\end{proof}

\section{Equivalent characterizations of the principal inertias and
$k$-correlation}
\label{apx:interp}

 In this appendix we discuss two distinct characterizations of the principal
 inertia components. The first characterization  is based on the work of Gebelein
 \cite{gebelein_statistische_1941} and the overview presented in
 \cite{anantharam_maximal_2013}. The second characterization is based on the
 overview presented in \cite{greenacre_geometric_1987}, and is analogous to the
 definition of moments of inertia from  classical mechanics.

\subsection{Correlation characterization}

Let $\calS$ be a collection of random variables defined as
\begin{align*}
  \calS \defined \left\{ (f(X),g(Y)):\right.&\ExpVal{}{f(X)}=\ExpVal{}{g(Y)}=0,\\
  &\left.\ExpVal{}{f^2(X)}=\ExpVal{}{g^2(Y)}=1 \right\}~.
\end{align*}
Then, for  $1\leq k \leq d$, we can compute the principal inertias recursively as
\begin{align*}
  \lambda^{1/2}_{k} &= \max_{(f(X),g(Y))\in\calS_k}\ExpVal{}{f(X)g(Y)},\\
(f_k(X),g_k(Y)) &= \argmax_{(f(X),g(Y))\in\calS_k}\ExpVal{}{f(X)g(Y)},
\end{align*}
where $\calS_1 = \calS$ and
\begin{align*}
    \calS_k = &\left\{ \left(f(X),g(Y) \right)\in \calS :
      \ExpVal{}{f(X)f_i(X)}=0,\right.\\ 
      &\left. \ExpVal{}{g(Y)g_i(Y)}=0,
    i=1,\dots,k-1   \right\}.
\end{align*}
for $2\leq k \leq d$. We can verify that  
$f_k(x)=a_{x,k}/\px(x)$ and $g_k(x)=b_{y,k}/\py(y)$.

\subsection{Spatial characterization}

Let $S$ be defined in equation \eqref{eq:SandP}.
Then the square of the singular values of $S$
are the principal inertias of $\Pxy$ \cite{greenacre_theory_1984}. The
decomposition of $S$ can be interpreted as  the moment of inertia of a set of
masses located in discrete points in space, as described below. We will change
the notation slightly in
this appendix in order to make this analogy clear.

Consider an $n$-dimensional Euclidean space $V$ with a symmetric positive
definite form $Q=D_Y$. For $x,y\in V$ we let  $\brk{x,y}=x^TQy$, $\normQ{x} =
\sqrt{\brk{x,x}}$ and $d(x,y)=\normQ{x-y}$.

Let $x_1,\dots,x_m \in V$, where each point is $x_i = \Ppygx{i}$. We associate to each point $x_i$ a mass
$w_i=\px(i)$, $1\leq i \leq m$. The \textit{barycenter} (center of mass) $\ox$ of the
points $x_1,\dots,x_m$ is simply $\ox=\Py$.  

Let $G=\Pygx$. If we translate the  space so the barycenter $\ox$ is the origin,
the new coordinates of $x_1,\dots,x_m$ are then the rows of $C = G- \ones\ox^T$,
We denote $C^T = [c_1,\dots,c_m]$. We define the \textit{moment of inertia}
$\inertia$ of the collection of $m$ points as the weighted sum of the squared
distances of each point to the barycenter:
\begin{align} 
  \inertia &\defined \sum_{i=1}^{m} w_i d^2(x_i-\ox)
  \label{eq:inertia}\\
  & =  \Tr{D_XCQC^T}. \label{eq:inertia_as_trace}
\end{align}

We now ask: What is the subspace $W^t\in V$ of dimension $t\leq m$ where the
projection of $x_1,\dots,x_m$ has the largest moment of inertia?  To answer this
question we need to determine a basis $a_1,\dots,a_t$ of $W^t$.  This is
equivalent to solving the following optimization:
\begin{align}
 \inertia_t \defined \max_{a_1,\dots,a_t}~ & \sum_{j=1}^d \normEuc{D_X^{1/2}CQa_j}^2
    \label{eq:max_subspaceK}\\
    \mathrm{s.t.}& \normQ{a_j}=1,{a_j}\in V~j=1,\dots,t\nonumber\\
  & \brk{a_i,a_j} =0, ~1\leq i<j\leq t \nonumber
\end{align}
Note that $S = D_X^{1/2}CQ^{1/2}$, and the decomposition in \eqref{eq:SandP} can
be interpreted accordingly. The solution of  \eqref{eq:max_subspaceK} is exactly
the sum of the square of the $t$ largest singular values of $S$, which, in turn,
is equal to $J_t(X;Y)$.

\section*{Acknowledgement}
The authors would like to thank Prof. Shafi Goldwasser and Prof. Yury Polyanskiy
 for the insightful discussions and suggestions throughout the course of
this work.

\bibliography{references}
\bibliographystyle{IEEEtran}

\end{document}